\def\eqalign#1{\,\vcenter{\openup\jot\m@th
  \ialign{\strut\hfil$\displaystyle{##}$&$\displaystyle{{}##}$\hfil
      \crcr#1\crcr}}\,}
\def\eqalignno#1{\displ@y \tabskip\@centering
  \halign to\displaywidth{\hfil$\displaystyle{##}$\tabskip\z@skip
    &$\displaystyle{{}##}$\hfil\tabskip\@centering
    &\llap{$##$}\tabskip\z@skip\crcr
    #1\crcr}}
\def\leqalignno#1{\displ@y \tabskip\@centering
  \halign to\displaywidth{\hfil$\displaystyle{##}$\tabskip\z@skip
    &$\displaystyle{{}##}$\hfil\tabskip\@centering
    &\kern-\displaywidth\rlap{$##$}\tabskip\displaywidth\crcr
    #1\crcr}}
\newdimen\pixel \pixel=.00333333 true in
\newcommand{\lref}[2][]{\hyperref[#2]{#1~\ref*{#2}}}
\renewcommand{\eqref}[1]{\hyperref[#1]{(\ref*{#1})}}
\numberwithin{equation}{section}
\def\begex{\begin{example}\parindent=0pt \rm}
\def\endex{\end{example}}
\def\square{\vbox{\hrule height.2pt\hbox{\vrule width.2pt height5pt \kern5pt
                                   \vrule width.2pt} \hrule height.2pt}}
\def\footnoterule{\kern-.5\pixel
        \hrule height \pixel width \columnwidth
        \kern-.5\pixel}
\newtheorem{theorem}{Theorem}[section]
\newtheorem{lemma}[theorem]{Lemma}
\newtheorem{corollary}[theorem]{Corollary}
\theoremstyle{definition}
\newtheorem{example}[theorem]{Example}
\theoremstyle{remark}
\newtheorem{remark}[theorem]{Remark}
\def\bigpar{\bigbreak\@afterindentfalse\@afterheading\ignorespaces}
\def\medpar{\medbreak\@afterindentfalse\@afterheading\ignorespaces}
\def\smallpar{\smallbreak\@afterindentfalse\@afterheading\ignorespaces}
\newcommand{\field}{\mathbb{F}}
\newcommand{\naturals}{\mathbb{N}}
\newcommand{\sgn}{\mathrm{sgn}}
\newcommand{\con}{\mathrm{con}}
\newcommand{\CC}{\mathcal{C}}
\newcommand{\per}{\mathrm{per}}
\newcommand{\XOR}{\mathrm{XOR}}
\newcommand{\poly}{\mathrm{poly}}
\newcommand{\F}{\mathbb{F}}
\newcommand{\numP}{\mbox{\rm \#P}}
\newcommand{\setcond}[2]{\left\{#1\: \middle|\: #2\right\}}
\begin{document}
\title{Almost Settling the Hardness of Noncommutative Determinant}

\author{
Steve Chien\thanks{Microsoft Research, Silicon Valley,  1065 La Avenida, Mountain View CA 94043, USA. email: {\tt schien@microsoft.com}.}
\and
Prahladh Harsha\thanks{Tata Institute of Fundamental Research, Homi Bhabha Road, Mumbai 400005, INDIA. email: {\tt prahladh@tifr.res.in}. Part of this work was done while the author was at Microsoft Research, Silicon Valley and the MIT Computer Science and Artificial Intelligence Laboratory.}
\and
Alistair Sinclair\thanks{Computer Science Division, University of California
Berkeley, CA 94720, USA.
email: {\tt sinclair@cs.berkeley.edu}.}
\and
Srikanth Srinivasan\thanks{Institute for Advanced Study, Einstein Drive, Princeton NJ 08540, USA.
email: {\tt srikanth@math.ias.edu}. Part of this work was done while the author was at Microsoft Research, Silicon Valley.}}

\maketitle
\thispagestyle{empty}

\begin{center}
\begin{abstract}
In this paper, we study the complexity of computing the determinant of a matrix over a non-commutative algebra. In particular, we ask the question, ``over which algebras, is the determinant easier to compute than the permanent?'' Towards resolving this question, we show the following {\em hardness} and {\em easiness} of noncommutative determinant computation.
\begin{itemize}
\item {[Hardness]} Computing the determinant of an $n\times n$ matrix whose entries are themselves $2\times 2$ matrices over a field is as hard as computing the permanent over the field. This extends the recent result of Arvind and Srinivasan, who proved a similar result which however required the entries to be of linear dimension.
\item {[Easiness]} Determinant of an $n\times n$ matrix whose entries are themselves $d\times d$ upper triangular matrices can be computed in $\poly(n^d)$ time.
\end{itemize}
Combining the above with the decomposition theorem of finite dimensional algebras (in particular exploiting the simple structure of $2\times 2$ matrix algebras), we can extend the above hardness and easiness statements to more general algebras as follows. Let $A$ be a finite dimensional algebra over a finite field with radical $R(A)$.
\begin{itemize}
\item {[Hardness]} If the quotient $A/R(A)$ is non-commutative, then computing the determinant over the algebra $A$ is as hard as computing the permanent.
\item {[Easiness]} If the quotient $A/R(A)$ is commutative and furthermore, $R(A)$ has nilpotency index $d$ (i.e., the smallest $d$ such that $R(A)^d =0$), then there exists a $\poly(n^d)$-time algorithm that computes determinants over the algebra $A$.
\end{itemize}
In particular, for any constant dimensional algebra $A$ over a finite field, since the nilpotency index of $R(A)$ is at most a constant, we have the following dichotomy theorem: if $A/R(A)$ is commutative, then efficient determinant computation is feasible and otherwise determinant is as hard as permanent.
\end{abstract}
\end{center}
\newpage

\setcounter{footnote}{1}
\setcounter{page}{1}
\setcounter{section}{0}
\section{Introduction}\label{sec:intro}

Given a matrix $M=\{m_{ij}\}$, the determinant of $M$, denoted by $\det(M)$ is given by the polynomial $\det(M) = \sum_{\sigma \in S_n}\sgn(\sigma)\prod_{i = 1}^n m_{i\sigma i}$, while the permanent of $M$, denoted by $\per(M)$ is defined by the polynomial $\per(M) = \sum_{\sigma \in S_n}\prod_{i = 1}^n m_{i\sigma i}.$ Though deceivingly similar in their definitions, the determinant and permanent behave very differently with respect to how efficiently one can compute these quantities. The determinant of a matrix over any field can be efficiently computed using Gaussian elimination. In fact, determinant continues to be easy even when the entries come from some commutative algebra, not necessarily a field~\cite{Samuelson1942,Berkowitz1984,Chistov1985,MahajanV1997}. Computing the permanent of a matrix over the rationals, on the other hand, as famously shown by Valiant~\cite{Valiant1979}, is just as hard as counting the number of satisfying assignments to a Boolean formula or equivalently \numP-complete even when the entries are just 0 and 1. Given this state of affairs, it is natural to ask, ``what is it that makes the permanent hard while the determinant is easy?'' Understanding this distinction in complexity of computing the determinant and permanent of a matrix is a fundamental problem in theoretical computer science.

Nisan first pioneered the study of noncommutative lower bounds in his 1991 groundbreaking paper~\cite{Nisan1991}. In one of that paper's more important results, Nisan proves that any algebraic branching program (ABP) that computes the determinant of a matrix $M = \{m_{ij}\}$ over the noncommutative free algebra $\F\langle x_{11},\ldots,x_{nn}\rangle$ must have exponential size; this then implies a similar lower bound for arithmetic formulas. This contrasts markedly with the many known efficient algorithms for determinant in commutative settings, which include polynomial-sized ABPs~\cite{MahajanV1997}. 

This problem takes on added significance in light of a connection discovered by Godsil and Gutman~\cite{GodsilG1981} and developed by Karmarkar et al.~\cite{KarmarkarKLLL1993} between computing determinants and exponential time algorithms for approximating the permanent. The promise of this approach was cemented when Chien et al.~\cite{ChienRS2003}, expanding on work by Barvinok~\cite{Barvinok1999}, showed that if one can efficiently compute determinant of an $n \times n$ matrix $M$ whose entries $m_{ij}$ are themselves matrices of $O(n^2)$ dimension, then there is a fully polynomial randomized approximation scheme for the permanent of a 0-1 matrix; similar results were later proven by Moore and Russell~\cite{MooreR2009}. Thus understanding the complexity of noncommutative determinant is of both algorithmic and complexity-theoretic importance.

Nisan's results are somewhat limited in that they apply only to the free algebra $\F\langle x_{i}\rangle$ and not to specific finite dimensional algebras (such as those used to approximate the permanent), and because they do not apply outside of ABPs and arithmetic formulas. Addressing the first concern, Chien and Sinclair~\cite{ChienS2007} significantly strengthened Nisan's original lower bounds to apply to a wide range of other algebras by analyzing those algebras' polynomial identities. In particular, they show that Nisan's lower bound extends to $d \times d$ upper-triangular matrix algebra over a field of characteristic $0$ for any $d > 1$ (and hence over $M_d(\F)$, the full $d\times d$ matrix algebra as well), the quaternion algebra, and several others, albeit only for ABPs.

In a significant advance, Arvind and Srinivasan~\cite{ArvindS2010} recently broke the ABP barrier and showed noncommutative determinant lower bounds for much stronger models of computation. They show that unless there exist small circuits to compute the permanent, there cannot exist small noncommutative circuits for the noncommutative determinant. More devastatingly from the algorithmic point of view, they show that computing $\det(M)$ where the $m_{ij}$ are linear-sized matrix algebras is at least as hard as (exactly) computing the permanent. Arvind and Srinivasan thus bring into serious doubt whether the determinant-based approaches to approximating the permanent are computationally feasible.

While these collections of results make substantial progress in our understanding of when determinant can be computed over a noncommutative algebra, they are still incomplete in significant ways. First, we do not know whether Arvind and Srinivasan's results rule out algorithms for determinants over constant-dimensional matrix algebras, which are still of use in approximating the permanent. More expansively, we still do not know the answer to what is perhaps the fundamental philosophical question underlying this:

\begin{quote} {\em Whether there is any noncommutative algebra over which we can compute determinants efficiently, or whether, as may seem attractive, commutativity is a necessary condition to having such algorithms?}
\end{quote}

\subsection{Our results}
In this paper, we fill in most of these remaining gaps. Our first main result extends Arvind and Srinivasan's results all the way down to $2\times 2$ matrix algebras.
\begin{theorem}\label{thm:2x2-intro} (stated informally\footnote{See \lref[Theorem]{thm:2x2} for a formal statement.})
Let $M_2(\F)$ be the algebra of $2\times 2$ matrices over a field $\F$. Then computing the determinant over $M_2(\F)$ is as hard as computing the permanent over $\F$.
\end{theorem}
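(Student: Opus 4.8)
The plan is to reduce the permanent over $\F$ to the determinant over $M_2(\F)$ by encoding a $0$-$1$ matrix (or more generally a matrix over $\F$) into a block matrix whose entries are $2\times 2$ matrices, in such a way that the noncommutative determinant of the block matrix recovers the permanent. The fundamental difficulty is that the noncommutative determinant still has signs $\sgn(\sigma)$ attached to the permutation terms, whereas the permanent does not; so the heart of the construction must be a gadget, built from $2\times 2$ matrices, that ``absorbs'' these signs. The natural device is to work with matrices whose entries lie in the group algebra-like subalgebra of $M_2(\F)$ generated by a copy of the sign group, or more concretely to use the fact that in $M_2(\F)$ one has a matrix $J$ with $J^2 = I$ that does not commute with other chosen matrices, so that reordering factors in a product $\prod_i m_{i\sigma i}$ introduces controllable signs. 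This is precisely the extra power that a $2\times 2$ matrix algebra has over a commutative algebra, and it is why the theorem can possibly be true at all.

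First I would set up the encoding: given an $n\times n$ matrix $A = (a_{ij})$ over $\F$, I would build an $N\times N$ matrix $\widetilde{A}$ with $N = O(n)$ (or $N = n + O(1)$ with an auxiliary block), whose $(i,j)$ entry is a specific $2\times 2$ matrix depending on $a_{ij}$ and on the row/column indices. The key algebraic identity to engineer is that for each permutation $\sigma$ contributing to $\det(\widetilde A)$, the product of the chosen $2\times 2$ blocks along $\sigma$, taken in the noncommutative order $i = 1, \dots, N$, equals $\sgn(\sigma) \cdot (\prod_i a_{i\sigma i}) \cdot E$ for a fixed idempotent or rank-one matrix $E$ (plus contributions that vanish or cancel). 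The sign $\sgn(\sigma)$ produced by commuting the $J$-factors past each other must exactly cancel the $\sgn(\sigma)$ in the definition of the determinant; achieving this simultaneously for all $\sigma$ is the crux. A clean way to realize this is to mimic the classical fact that $\sgn(\sigma)$ is the parity of the number of inversions, and to place a non-commuting ``flip'' matrix in each entry so that each inversion of $\sigma$ forces one transposition of flip-matrices, contributing a factor of $-1$; the bookkeeping here is the main obstacle and will require a careful, induction-friendly choice of the blocks (likely making $\widetilde A$ upper-triangular-plus-permutation in structure so that only few permutations survive, or alternatively making every surviving $\sigma$ contribute with a sign that is read off cleanly).

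Next I would verify that the ``garbage'' terms — products along permutations $\sigma$ that do not correspond to permutations of the original $[n]$, or cross-terms in the $2\times 2$ block multiplication — either vanish identically or cancel in pairs. Here I expect to use an involution argument: pair up offending permutations $\sigma \leftrightarrow \sigma'$ differing by a transposition in the auxiliary coordinates, and show their contributions are negatives of each other as $2\times 2$ matrices. One then extracts the permanent from $\det(\widetilde A)$ by reading off a single matrix entry (say the top-left entry) of the resulting $2\times 2$ matrix, which by the identity above equals $\sum_\sigma \prod_i a_{i\sigma i} = \per(A)$. Finally I would note that $N = O(n)$ and the blocks are computable in polynomial time, and that the reduction works over any field $\F$ (with the characteristic-$2$ case being trivial since then determinant equals permanent), giving the stated hardness. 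I expect the sign-absorbing gadget and the matching cancellation of garbage terms to be the technically delicate parts; the rest is routine verification of the block-matrix arithmetic.
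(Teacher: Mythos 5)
Your plan takes a genuinely different route from the paper --- a direct block-encoding reduction from $\per$ over $\F$ to $\det$ over $M_2(\F)$ with $N=n+O(1)$ --- but the central step is missing and, as described, rests on a misconception. In the Cayley determinant the product $m_{1\sigma(1)}m_{2\sigma(2)}\cdots m_{N\sigma(N)}$ is \emph{always} taken in row order, for every $\sigma$; no two factors are ever transposed past each other as $\sigma$ varies, so ``each inversion of $\sigma$ forces one transposition of flip-matrices'' does not describe anything that happens in the computation. What your encoding actually requires is a family of fixed blocks $B_{ij}\in M_2(\F)$ with $B_{1\sigma(1)}B_{2\sigma(2)}\cdots B_{N\sigma(N)}=\sgn(\sigma)E$ (up to cancelling garbage) for all $\sigma$; i.e., a width-$2$ oblivious matrix program whose $i$-th factor depends only on $(i,\sigma(i))$ must compute the parity of inversions of $\sigma$. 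This is precisely the obstacle that forced Arvind and Srinivasan to use blocks of \emph{linear} dimension for this style of direct reduction, and it is the reason the present paper does not attempt it. Your fallback (``garbage terms cancel by an involution'') is not a construction, and the whole difficulty of the theorem is concentrated in the gadget you have deferred. So the proposal, as it stands, has a genuine gap rather than a complete alternative proof.

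For contrast, the paper sidesteps the need to compute $\sgn(\sigma)$ globally from column indices: it reduces from \#3SAT via Valiant's cycle-cover construction, where the permutations that matter are cycle covers of a gadget graph whose cycle structure (and hence sign) is controlled \emph{locally}. Variable and clause gadgets are sign-corrected with scalar weights ($\pm I_2$), and the noncommutativity of $M_2(\F)$ is used only inside the XOR gadget, via three anticommuting-type matrices $X,Y,Z$ whose relevant $4\times4$ block minors are computed explicitly (\lref[Lemma]{lem:dets}); the output is $aI_2+bJ_2$ with $a+b=4^{3m}S$ rather than a clean multiple of a single idempotent. If you want to salvage your approach, you would either have to prove that the required $2\times 2$ family $\{B_{ij}\}$ exists (unlikely; the accumulated $2\times 2$ state cannot remember which columns have been used), or restructure the target instance so that sign corrections can be made locally --- which is essentially what the paper's gadget route does. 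Your remark that characteristic $2$ is trivial is correct but note the paper's formal statement (\lref[Theorem]{thm:2x2}) excludes that case and treats characteristic $0$ and odd $p$ separately.
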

The proof of this theorem works by retooling Valiant's original reduction from \#3SAT to permanent. One would not expect to be able to modify Valiant's reduction to go from \#3SAT to determinant over a field $\F$, as there are known polynomial-time algorithms in that setting. However, when working with $M_2(\F)$, what we show is that there is just enough noncommutative behavior in $M_2(\F)$ to make Valiant's reduction (or a slight modification of it) go through.

Given the central role of matrix algebras in ring theory, this allows us to prove similar results for other large classes of algebras. In particular, consider a finite-dimensional algebra $A$ over a finite field $\F$. This algebra has a {\em radical} $R(A)$, which happens to be a nilpotent ideal of $A$. Combined with classical results from algebra (in particular the simple structure of the $2\times 2$ matrix algebras) the above theorem can be extended as follows to yield our second main result.
\begin{theorem}\label{thm:radical-intro} (stated
	informally\footnote{See \lref[Theorem]{thm_algebra_hardness} for a formal statement.})
	If $A$ is a fixed\footnote{By \emph{fixed}, we mean that the
	algebra is not part of the input; we fix an algebra $A$ and consider
	the problem of computing the determinant over $A$.} finite dimensional algebra over a finite field such that the quotient $A/R(A)$ is noncommutative, then computing determinant over $A$ is as hard as computing the permanent.
\end{theorem}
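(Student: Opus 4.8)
The plan is to reduce this to the hardness of the noncommutative determinant over $M_2(\F)$ (\lref[Theorem]{thm:2x2-intro}), using the Wedderburn structure theory of finite-dimensional algebras together with one simple but essential observation: the noncommutative determinant $\det(M) = \sum_{\sigma}\sgn(\sigma)\prod_i m_{i\sigma i}$ is a signed sum of monomials in the matrix entries multiplied in a fixed order, so it is preserved by \emph{every} ring homomorphism of the underlying algebra --- in particular by $\F$-algebra surjections and by (not necessarily unital) embeddings of one algebra as a corner of another. This lets us transport determinant computations back and forth along such maps; and since every algebra in sight is fixed and finite-dimensional, all these transport maps are constant-size $\F$-linear maps, hence evaluable in $O(1)$ arithmetic operations per matrix entry.

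First I would unpack the hypothesis. As $A$ is finite-dimensional, $R(A)$ is nilpotent and $A/R(A)$ is semisimple, so Wedderburn--Artin gives $A/R(A) \cong \prod_{i=1}^{k} M_{n_i}(D_i)$ with each $D_i$ a division ring. Here the finiteness of $\F$ enters crucially: each $D_i$ is then a \emph{finite} division ring, so by Wedderburn's little theorem it is a (commutative) field, in fact a finite extension of $\F$ since the image of $\F$ lies in the center of each factor. Consequently $A/R(A)$ is noncommutative if and only if $n_i \ge 2$ for some $i$. Fix such an $i$, put $B := M_{n_i}(D_i)$, and note that we obtain an $\F$-algebra surjection $\psi\colon A \twoheadrightarrow A/R(A) \twoheadrightarrow B$ together with an injective ring homomorphism $M_2(\F) \hookrightarrow M_2(D_i) \hookrightarrow B$ (entrywise inclusion of fields, followed by the top-left $2\times 2$ corner embedding, which is where $n_i \ge 2$ is used).

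Next I would chain together the reductions
\[
\per\text{ over }\F \;\le\; \det\text{ over }M_2(\F) \;\le\; \det\text{ over }B \;\le\; \det\text{ over }A,
\]
where $\le$ is polynomial-time Turing reducibility. The first step is \lref[Theorem]{thm:2x2-intro}. For the second, given a matrix $M$ over $M_2(\F)$, push its entries into $B$ via the fixed embedding $j$, invoke the hypothetical algorithm for $\det$ over $B$, and apply $j^{-1}$ to its (in-image) output; this is correct since $\det_B(j(M)) = j(\det_{M_2(\F)}(M))$ by the observation above. For the third, fix once and for all preimages under $\psi$ of an $\F$-basis of $B$, extended $\F$-linearly to a (set-theoretic, efficiently computable) section $\rho$ of $\psi$; given a matrix $N$ over $B$, form the matrix $\tilde N$ over $A$ with entries $\rho(N_{ij})$, run the algorithm for $\det$ over $A$, and apply $\psi$, using $\psi(\det_A(\tilde N)) = \det_B(\psi(\tilde N)) = \det_B(N)$. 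Composing, any algorithm for $\det$ over $A$ yields one for $\per$ over $\F$.

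There is no deep obstacle here once \lref[Theorem]{thm:2x2-intro} is in hand; the points that genuinely require care are all of the bookkeeping kind. The most important is to appreciate exactly where the finite-field hypothesis is indispensable: without Wedderburn's little theorem one could have every $n_i = 1$ while some $D_i$ is a noncommutative division ring, so that $A/R(A)$ is noncommutative yet contains no $2\times 2$ matrix algebra, and the naive noncommutative determinant over a division ring is not known to be hard. The remaining checks are routine: that $\det$ really is preserved by the non-unital corner embedding (it is, since $\det$ uses only addition and multiplication); that $\psi$ and $\rho$ are genuine constant-size linear maps, so the reduction runs in polynomial time and inflates the matrix dimension by only a constant factor; and --- if one prefers to route the argument through the extension field $D_i$ rather than $\F$ --- that $\per$ over $D_i$ is itself at least as hard as $\per$ over $\F$, which is immediate since the latter is a special case.
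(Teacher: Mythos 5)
Your proof is correct, and the structural part coincides with the paper's: \lref[Theorem]{thm_algebra_hardness} likewise decomposes the semisimple quotient $A/R(A)$ into simple factors, uses Wedderburn's little theorem (this is exactly where finiteness of $\field$ enters for the paper too) to make each factor a matrix algebra over a field extension, and locates a factor of matrix dimension at least $2$ into which $M_2(\field)$ embeds as a corner. Where you genuinely diverge is in transporting the hardness from that factor up to $A$. The paper invokes the Wedderburn--Malcev theorem (\lref[Theorem]{thm_wedderburn_malcev}) to realize $A/R(A)$ as an honest subalgebra $B\subseteq A$, so a matrix over $M_2(\field)$ becomes literally a matrix over $A$ and the reduction is a restriction of the problem with no post-processing. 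You instead go \emph{down} the surjection $\psi\colon A\twoheadrightarrow A/R(A)\twoheadrightarrow M_{n_i}(D_i)$, lift the input entrywise through an arbitrary $\field$-linear section $\rho$, and recover the answer as $\psi(\det_A(\tilde N))$, using the fact that the row-ordered Cayley determinant is a fixed polynomial in the entries and hence commutes with any ring homomorphism. Your route is more elementary---it needs only the quotient map, which exists for free, rather than the lifting of the semisimple part guaranteed by Wedderburn--Malcev---and it makes explicit the homomorphism-invariance of $\det$ under the non-unital corner embedding, a point the paper leaves implicit in the step from $M_2(\field)$ to $A_1$. What the paper's route buys is brevity (subalgebra containment makes the reduction trivial) and economy of tools, since Wedderburn--Malcev is needed anyway for the matching upper bound in \lref[Theorem]{thm_det_algo}. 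One cosmetic note: the formal statement is for odd characteristic $p$ and concludes $\mathrm{Mod}_pP$-hardness, inherited from \lref[Theorem]{thm:2x2}; your chain of reductions bottoms out there in exactly the same way, so nothing further is needed.
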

In particular, if the algebra is semisimple (i.e, $R(A) = 0$), then
the commutativity of $A$ itself is determinative: if $A$ is
commutative, there is an efficient algorithm for computing $\det$ over $A$; otherwise, it is at least as hard as computing the permanent. The class of semisimple algebras includes several well-known examples, such as group algebras.

It may be tempting at this point to see the sequence of lower bounds
starting from Nisan's original work and conjecture that computing
$\det$ over $A$ for some algebra $A$ is feasible if and only if $A$ is
commutative. Perhaps surprisingly, we show that this is not the
case---in, fact there do exist noncommutative algebras $A$ for
which there are polynomial-time algorithms for computing $\det$ over $A$. For instance, in our third main result, we show that computing the determinant where the matrix entries are $d\times d$ upper triangular matrices for constant $d$ is easy. For reasons that will soon be clear, we will state this result, more generally, in the language of radicals.
\begin{theorem}\label{thm:alg-intro}
Given a finite dimensional algebra $A$ and its radical $R(A)$, let $d$ be the smallest value for which $R(A)^d = 0$ (i.e.~any product of $d$ elements of $R(A)$ is $0$). If $A/R(A)$ is commutative, there is an algorithm for computing $\det$ over $A$ in time $\poly(n^{d})$.
\end{theorem}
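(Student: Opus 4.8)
The plan is to reduce the computation of $\det(M)$ for an $n\times n$ matrix $M$ over $A$ to a collection of determinant computations over the commutative algebra $B \defeq A/R(A)$, where each such computation can be carried out by a standard commutative determinant algorithm (e.g.\ \cite{MahajanV1997,Berkowitz1984}). The starting point is to write each entry $m_{ij} = b_{ij} + r_{ij}$ where $b_{ij} \in B$ is the image of $m_{ij}$ under the quotient map $\pi: A \to B$ (lifted back to a fixed set of coset representatives) and $r_{ij} \in R(A)$. Expanding the Leibniz formula $\det(M) = \sum_{\sigma} \sgn(\sigma) \prod_i m_{i\sigma(i)}$ multilinearly in this $b + r$ decomposition, every term in which at least $d$ of the factors are chosen from $R(A)$ vanishes, because $R(A)^d = 0$ and $R(A)$ is an ideal (so once $d$ radical factors appear in a product, even interleaved with arbitrary elements of $A$, the product is $0$). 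Hence $\det(M)$ is a sum over the choice of a subset $S \subseteq \{1,\dots,n\}$ with $|S| \le d-1$ of the positions at which the radical factor is taken, and for each such $S$ we get a sum over permutations $\sigma$ of $\sgn(\sigma)$ times a product that is "mostly" in $B$ with at most $d-1$ interleaved radical factors. Crucially $|S| \le d-1$ gives only $\poly(n^{d})$ choices of $S$ (and of the ordered tuple recording which radical factors appear).

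The key structural point is that, for a \emph{fixed} choice $S$ of radical positions, the resulting sum over $\sigma$ is itself essentially a determinant-like quantity over the commutative ring $B$, with the radical factors acting as "constants." Concretely, fix the rows $i_1 < \dots < i_k$ in $S$; the permutation $\sigma$ restricted to these rows picks out $k$ entries whose radical parts $r_{i_t \sigma(i_t)}$ multiply together (in the order dictated by $\sigma$, after interleaving with the commutative $b$-parts of the other rows). Because $B$ is commutative, all the $b$-factors can be freely reordered and grouped; what remains noncommutative is a product of at most $d-1$ elements of $R(A)$, which lives in the $B$-bimodule $R(A)$ (or more precisely in $R(A)^{j}/R(A)^{j+1}$ after filtering), and this product depends only on the cyclic-type data of how $\sigma$ maps $S$. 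I would organize the computation by summing, for each $S$ and each way the radical indices are "chained," a determinant (or permanent-of-small-order times determinant-of-cofactor) over $B$ of the submatrix of $B$-parts on the complementary rows and columns. Each such commutative determinant is computable in $\poly(n)$ time, the scalars from $B$ and the radical products are manipulated in $O(\dim A)^{O(1)}$ arithmetic, and there are $\poly(n^{d})$ terms, giving total time $\poly(n^{d})$.

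The main obstacle — and the step requiring genuine care rather than routine bookkeeping — is precisely the noncommutativity that survives within each fixed-$S$ term: although $B$ is commutative, $A$ is not, so when we "pull out" the radical factors we must track the \emph{order} in which they and the interleaved $b$-factors appear along the cycle structure of $\sigma$, and we cannot simply factor the term as (product of $r$'s) $\times$ (determinant of $b$'s). The right way to handle this is to note that $R(A)$, $R(A)^2/R(A)^3, \dots$ are finitely generated bimodules over the commutative ring $B$, pick bases, and express each length-$\le d-1$ radical product in these bases with coefficients in $B$; then each fixed-$S$ term becomes a $B$-linear combination of genuine commutative determinants of $(n-k)\times(n-k)$ submatrices over $B$, where the combining happens over the $\poly(n^{d-1})$ possible chainings of the $k \le d-1$ radical rows through $\sigma$. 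Verifying that this reorganization is valid — i.e.\ that every term of the multilinear expansion is accounted for exactly once with the correct sign, and that the per-term work is polynomial — is where the proof's technical content lies; the degree-$d$ bound on nilpotency is exactly what keeps the number of such terms, and hence the running time, polynomial for constant $d$.
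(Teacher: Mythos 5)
Your overall strategy---split each entry as $m_{ij}=b_{ij}+r_{ij}$, expand the Leibniz formula multilinearly, kill every term containing at least $d$ radical factors using $R(A)^d=0$ together with the fact that $R(A)$ is an ideal, and group the surviving terms by the set $S$ of radical rows and the injection $f=\sigma|_S$---matches the first half of the paper's proof, and your $\poly(n^d)$ count of the pairs $(S,f)$ is correct. But there are two problems. The smaller one: ``lifted back to a fixed set of coset representatives'' is not enough. An arbitrary linear section of $A\to A/R(A)$ is not closed under multiplication, so your $b$-factors do not sit inside a commutative subalgebra of $A$ and cannot be ``freely reordered and grouped'' there; they commute only modulo $R(A)$, and every reordering re-introduces radical terms. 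You need the Wedderburn--Malcev theorem, which supplies a genuine subalgebra $B\subseteq A$ with $A=B\oplus R(A)$ as vector spaces and $B\cong A/R(A)$ commutative (plus, for the algorithmic statement, its effective version due to de Graaf et al.).

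The serious gap is your last step. For fixed $(S,f)$ with $S=\{i_1<\cdots<i_t\}$, the surviving term is $\sum_{\sigma:\,\sigma|_S=f}\sgn(\sigma)\bigl(\prod_{i<i_1}b_{i\sigma(i)}\bigr)r_{i_1f(i_1)}\bigl(\prod_{i_1<i<i_2}b_{i\sigma(i)}\bigr)r_{i_2f(i_2)}\cdots$. The radical factors are indeed fixed, but the $t+1$ segments of $b$'s between them vary with $\sigma$ and cannot be commuted past the $r$'s. A commutative determinant of the complementary $(n-t)\times(n-t)$ submatrix over $B$ only computes $\sum_\sigma\sgn(\sigma)\prod_{i\notin S}b_{i\sigma(i)}$, i.e., the signed sum of the \emph{merged} products; it destroys the joint information about the individual segment products, which is exactly what you need since distinct segments are multiplied by distinct elements that do not commute with $B$. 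Your proposed fix via bimodule bases for $R(A)^{j}/R(A)^{j+1}$ does not recover this joint information, so the claim that each fixed-$S$ term is ``a $B$-linear combination of genuine commutative determinants of $(n-k)\times(n-k)$ submatrices over $B$'' is unjustified and, as stated, false. The paper's resolution is to pass to the commutative algebra $B^{\otimes(t+1)}$ (of dimension at most $D^{d}$), sending the $b$-entries of rows in the $\ell$-th segment into the $\ell$-th tensor factor and the radical rows to the identity; a single determinant over $B^{\otimes(t+1)}$ then yields all coefficients $c_{\overline{k}}$ for which the fixed-$(S,f)$ term equals $\sum_{\overline{k}}c_{\overline{k}}\,e_{k_0}r_{i_1f(i_1)}e_{k_1}\cdots r_{i_tf(i_t)}e_{k_t}$. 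The underlying reason this works is that the map $(b^{(0)},\ldots,b^{(t)})\mapsto b^{(0)}r_{i_1f(i_1)}b^{(1)}\cdots r_{i_tf(i_t)}b^{(t)}$ is multilinear and hence factors through $B^{\otimes(t+1)}$. Without this device (or an equivalent one), your final step does not go through.
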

While this description of the class of algebras that allow efficient determinant computation is somewhat abstruse, it does include several familiar algebras. Perhaps most familiar is the algebra $U_d(\F)$ of $d\times d$ upper-triangular matrices, for which $R(U_d(\F))^d = 0$. What the result states is that the key to whether determinant is computationally feasible is not commutativity alone. For noncommutative algebras, it is still possible that determinant can be efficiently computed, so long as all of the noncommutative elements belong to a nilpotent ideal and have a limited ``lifespan'' of sorts.

The above theorems together yield a nice dichotomy for constant dimensional algebras over a finite field. Given any such algebra $A$ of constant dimension $D$ over a finite field, either $A/R(A)$ is commutative or not. Furthermore, if $A/R(A)$ is commutative, we have that $R(A)$ is nilpotent with nilpotency index at most $D$ which is a constant. We thus, have the following dichotomy: if $A/R(A)$ is commutative, then efficient determinant is feasible else determinant is as hard as permanent.

Does this yield a complete characterization of algebras over which efficient determination computation is feasible? Unfortunately not. In particular, what if the dimension $D$ is  non-constant, i.e., the algebra is not fixed but given as part of the input or if the algebra is over a field of characteristic 0? In these cases, the lower bound of \lref[Theorem]{thm:radical-intro} and upper bound of \lref[Theorem]{thm:alg-intro} are arguably close, but do not match. A complete characterization remains an intriguing open problem.

\paragraph{Organization of the paper:} After some preliminaries in
\lref[Section]{sec:prelim}, we prove lower and upper bounds in two
concrete settings: we prove a lower bound for $2\times 2$ matrix
algebras in \lref[Section]{sec:lower} and an upper bound for
small-dimensional upper triangular matrix algebras in
\lref[Section]{sec:upper}. The results on general algebras are in
\lref[Section]{sec:general}, followed by some discussion in
\lref[Section]{sec:disc}.

\section{Preliminaries}\label{sec:prelim}
In this section we define terms and notation that will be useful later.

An (associative) algebra $A$ over a field $\field$ is a vector
space over $\field$ with a bilinear, associative multiplication
operator that distributes over addition. That is, we have a map
$\cdot: A\times A\rightarrow A$ that satisfies: (a) $x\cdot(y\cdot z)
= (x\cdot y)\cdot z$ for any $x,y,z\in A$, (b) $\lambda (x\cdot y) =
(\lambda x)\cdot y = x\cdot (\lambda y)$, for any $\lambda\in\field$
and $x,y\in A$, and (c) $x\cdot (y+z) = x\cdot y + x\cdot z$ and
$(y+z)\cdot x = y\cdot x + z\cdot x$ for any $x,y,z\in A$. We will
assume that all our algebras are \emph{unital}, i.e., they contain an
identity element. We will denote this element as $1$.  For more about
algebras, see Curtis and Reiner's book \cite{CurtisR1962}.  A tremendous range
of familiar objects are algebras; we will be concerned with the
algebra of $d \times d$ matrices over $\F$, which we will denote
$M_d(\F)$, as well as the algebra of $d \times d$ upper-triangular
matrices over $\F$, or $U_d(\F)$. Other prominent examples are the
free algebra $\F\langle x_i \rangle$, the algebra of polynomials
$\F[x_i]$, group algebras over a field, or a field considered as an
algebra over itself.

Given an $n \times n$ matrix $M = (m_{ij})$ whose elements belong to an algebra $A$, the {\em determinant} of $M$, or $\det(M)$, is defined as the polynomial
$\det(M) = \sum_{\sigma \in S_n}\sgn(\sigma)\prod_{i = 1}^n m_{i\sigma i}.$
Note that when $A$ is noncommutative, the order of the multiplication becomes important. When the order is by row, as above, we are working with the Cayley determinant. The permanent of the same matrix is
$\per(M) = \sum_{\sigma \in S_n}\prod_{i = 1}^n m_{i\sigma i}.$  We will denote by $\det_A$ (and $\per_A$) the problem of computing the determinant (and permanent) over an algebra $A$.

We recall also the familiar recasting of the determinant and permanent in terms of cycle covers on a graph.
Suppose $M = (m_{ij})$ is an $n\times n$ matrix over an algebra $A$.  Let $G(M)$ denote the weighted directed graph on vertices $1,\ldots,n$ that has $M$ as its adjacency matrix. A permutation
$\pi:[n]\rightarrow [n]$ from the rows to the columns of $M$ can be identified with the set of edges $(i,\pi(i))$ in the
graph $G(M)$; it is easily observed that these edges form a (directed) cycle cover of $G(M)$. Letting $\CC(G)$ denote the collection of all cycle covers of $G(M)$, we can write

\begin{equation}\label{eqn:det}
\det(M) = \sum_{C\in\CC(G(M))}
\sgn(C)m_{1,C(1)}m_{2,C(2)}\cdots m_{n,C(n)}
\end{equation}
and
\begin{equation}\label{eqn:per}
\per(M) = \sum_{C\in\CC(G(M))}
m_{1,C(1)}m_{2,C(2)}\cdots m_{n,C(n)},
\end{equation}
where for a given cycle cover $C$, $C(i)$ represents the successor of vertex $i$ in $C$, and $\sgn(C)$ is the sign of $C$. It is known that $\sgn(C) = (-1)^{n - c}$, with $c$ being the number of cycles in $C$, and that this is also the sign of the corresponding permutation. We will denote the weight of an edge $e =(x,y)$ as $w(e)$ or $w(x,y)$. Further, for a subset of edges $B = \{(x_1,y_1),\ldots,(x_{|B|},y_{|B|})\}$ of a cycle cover $C$ with $x_i < x_{i+1}$, we can define the weight of $B$ as $w(B) = \prod_{i = 1}^{|B|} w(x_i,y_i)$. (Note that the product is in order by source vertex.) Thus $w(C) = \prod_i m_{i,C(i)}$ by $w(C)$ is the weight of the cycle cover, and the product $\sgn(C)\prod_i m_{i,C(i)}$ is the {\em signed weight} of $C$.

\section{The lower bound for $2\times 2$ matrix algebras}\label{sec:lower}
In this section, we show our key lower bound for $2 \times 2$ matrix algebras. Our proof is based on Valiant's seminal reduction from \#3SAT to permanent, as modified by Papadimitriou~\cite{Papadimitriou-complexity} and also described in the complexity textbook by Arora and Barak~\cite{AroraB2009}. We first give a self-contained description of that, before detailing our modifications of it.

\subsection{Valiant's lower bound for the permanent}\label{sec:valiant}
Valiant's reduction is from \#3SAT to permanent; given a \#3SAT formula $\varphi$ on $n$ variables and $m$ clauses, he constructs
a weighted directed graph $G_\varphi$ on $\poly(n,m)$ vertices such that the number of satisfying assignments of $\varphi$ is equal to ${\rm constant} \times \per(M(G_\varphi))$, where $M(G_\varphi)$ is the adjacency matrix of $G_\varphi$. The key components of $G_\varphi$ are the variable, clause, and XOR gadgets shown in \lref[Figure]{fig:gadgets}.\footnote{We follow a convention from~\cite{AroraB2009} in allowing gadgets to sometimes have multiple edges between the same two vertices. While technically prohibited in a graph defined by a matrix, this can be fixed by adding an extra node in these edges.} The idea is that there will be a relation between satisfying assignments of $\varphi$ and cycle covers of $G_\varphi$; moreover, for each satisfying assignment, the total weight of its corresponding cycle covers will be the same.

\begin{figure}
\begin{center}
\includegraphics[width=5.0in]{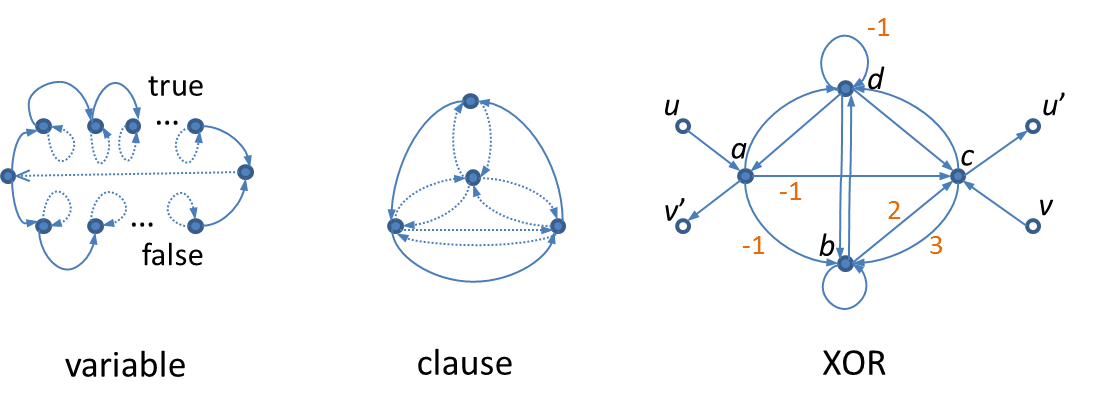}\\
\caption{Gadgets used in proving Valiant's lower bound. All edges have weight 1 unless noted otherwise. For the variable and clause gadgets, the solid (not dotted) edges are called {\em (vertex) external edges} or {\em (clause) external edges}. Note that the number of external edges in a variable gadget is not fixed, and need not be the same for the True and False halves of the gadget.}\label{fig:gadgets}
\end{center}
\end{figure}

Before defining $G_\varphi$ itself, we first work with a preliminary graph $G^0_\varphi$ that contains $n$ variable gadgets and $m$ clause gadgets, but no XOR gadgets; all of the gadgets are disjoint from each other. For the moment, the number of external edges in each of the variable gadgets is unimportant. In analyzing $G^0_\varphi$, we will use the following:
\begin{lemma}\label{lem:var-clause}
The following hold for the gadgets in \lref[Figure]{fig:gadgets}:
(a) A variable gadget has exactly two cycle covers. Each cycle cover contains one long cycle using all of the external edges on one side of the gadget and the long middle edge, as well as all the self-loops on the other side of the gadget.
(b) In a clause gadget, there is no cycle cover that uses all three
		external edges. For every \emph{proper} subset $S$ of the external edges in a clause gadget,
		there is exactly one cycle cover that contains exactly the edges
		in $S$; this cycle cover has weight $1$.
\end{lemma}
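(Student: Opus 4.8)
My plan is to treat both parts as structural case analyses on the fixed gadget graphs depicted in \lref[Figure]{fig:gadgets}. The only tool I need is the defining property of a cycle cover: each vertex has exactly one incoming and one outgoing selected edge. Since each vertex of each gadget is incident to only a handful of edges, I would repeatedly ``propagate'' forced choices: once one edge at a vertex is committed (or forbidden), the remaining options there, and then at its neighbours, get pinned down.

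\textbf{Part (a): the variable gadget.} I would start at the branch vertex $v$ where the True and False halves meet. In any cycle cover $C$, the out-edge of $v$ used by $C$ enters exactly one half, say the True half (the other case being symmetric). Tracing the cycle through $v$: each internal vertex of the True side is incident only to its self-loop and the forward edge along that side, and since the cycle has entered the side it cannot terminate at a self-loop, so the whole chain of True-side external edges together with the long middle edge is forced and closes into one long cycle through $v$. Then every vertex of the False half sits outside this cycle, and the only edge at such a vertex not leading into an already-committed cycle is its self-loop, forcing all the False-side self-loops. So $C$ is exactly the ``True'' cover in the statement; the symmetric choice at $v$ gives the ``False'' cover, and there are no others. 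I would note explicitly that this argument does not care how many external edges each side carries (or whether the two sides agree), since it only uses that a side is a chain of unit-weight edges with pendant self-loops.

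\textbf{Part (b): the clause gadget.} This is a fixed graph of constant size with three distinguished external edges $e_1,e_2,e_3$, so I would just run the finite check. First, assuming a cycle cover uses all three $e_i$, committing these three edges fixes an in- and out-edge at each endpoint; propagating through the remaining internal vertices should reach a vertex with no legal in/out pair --- equivalently, deleting $e_1,e_2,e_3$ leaves a graph with no cycle cover --- a contradiction, so no cover uses all three. For the positive part I would fix a proper subset $S\subsetneq\{e_1,e_2,e_3\}$, commit the edges of $S$, forbid those of $\{e_1,e_2,e_3\}\setminus S$, and propagate forced choices to show the partial selection extends in exactly one way to a cycle cover of the gadget; since every edge of the clause gadget has weight $1$, that cover has weight $1$. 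Running over the (at most) seven proper subsets --- fewer if I exploit the gadget's symmetry under permuting the external edges --- finishes it.

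\textbf{Where the difficulty lies.} I do not expect a deep obstacle; the work is in reading the incidence structure off \lref[Figure]{fig:gadgets} and being exhaustive about the cases. Two points will need a little care. First, in part (a) the two halves may carry different, and unboundedly many, external edges, so the forced-chain step must be written for a chain of arbitrary length rather than for the drawn picture. Second, by the convention in the footnote a few drawn ``edges'' are really length-two paths through an auxiliary vertex; such a vertex has a unique cover extension once either of its two edges is decided, so this changes none of the counts above, but I would mention it so the cycle-cover bookkeeping is airtight.
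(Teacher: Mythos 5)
The paper states this lemma without proof, treating it as a standard property of Valiant's gadgets (following Papadimitriou and Arora--Barak) that is verified by inspection of \lref[Figure]{fig:gadgets}. Your proposal --- propagating forced in/out-degree choices from the branch vertex for the variable gadget, and an exhaustive check over the seven proper subsets of external edges for the constant-size clause gadget --- is exactly that inspection, and the two points you flag (arbitrary-length chains of external edges, and the auxiliary vertices introduced to eliminate multi-edges) are the right ones to handle; the argument is sound, with the only remaining work being to actually execute the finite case check against the gadget as drawn.
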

As all $n + m$ gadgets in $G^0_\varphi$ are disjoint, any cycle cover of $G^0_\varphi$ will be a union of $n + m$ smaller cycle covers--namely, one for each gadget. The choice of cycle cover for each gadget defines the value of each variable and which literals are satisfied in each clause.

More precisely, for a variable gadget, let the term {\em True cycle cover} denote the cycle cover containing the external edges on the True side of the gadget. Analogously, the {\em False cycle cover} refers to the cycle cover containing the external edges on the False side of the gadget. The idea is that a cycle cover of $G^0_\varphi$ sets a variable to T or F by choosing either the True or False cycle cover. Meanwhile, for clause gadgets, the intention is that each external edge will correspond to one of the three literals in the clause, and an external edge is used in a cycle cover if and only if the corresponding literal is set to F (i.e.~the corresponding literal is {\em not} satisfied). Since no cycle cover can contain all three external edges of a clause gadget, in this interpretation at least one of the literals in the clause must be satisfied.

We say a cycle cover $C$ of $G^0_\varphi$ is {\em consistent} if (1) whenever $C$ contains the True cycle cover of the gadget for a variable $x_k$, it contains {\em all} clause external edges for instances of the negative literal $\overline{x_k}$ and {\em no} clause external edges for instances of the positive literal $x_k$, and (2) conversely, whenever $C$ contains the False cycle cover for $x_k$, it contains all clause external edges for instances of $x_k$ but no clause external edges for instances of $\overline{x_k}$. A consistent cycle cover therefore does not ``cheat'' by claiming to set $x_k$ to T (for example) in a variable gadget but to F in a clause gadget. This is close to what we want:

\begin{lemma}\label{lem:consistent}
The number of satisfying assignments of $\varphi$ is equal to the total weight of consistent cycle covers of $G^0_\varphi$.
\end{lemma}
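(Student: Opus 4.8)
The plan is to establish a weight-preserving correspondence between satisfying assignments of $\varphi$ and sets of consistent cycle covers of $G^0_\varphi$. First I would use the structure of $G^0_\varphi$ as a disjoint union of $n$ variable gadgets and $m$ clause gadgets, so that by \lref[Lemma]{lem:var-clause} any cycle cover $C$ of $G^0_\varphi$ decomposes uniquely as a choice, for each variable gadget, of either its True or its False cycle cover, together with, for each clause gadget, a choice of some proper subset $S$ of its three external edges. This decomposition lets me read off from $C$ an assignment $\alpha_C$ to the variables (namely $\alpha_C(x_k) = \mathrm{T}$ iff $C$ restricts to the True cycle cover on the gadget for $x_k$). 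The key point from \lref[Lemma]{lem:var-clause} is that every such combination of choices is realized by exactly one cycle cover, and every variable and clause cycle cover has weight $1$, so $w(C) = 1$ for every cycle cover $C$ of $G^0_\varphi$.

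Next I would show that the consistent cycle covers are exactly those arising from satisfying assignments, and count them. Fix an assignment $\alpha$. The consistency conditions (1) and (2) completely pin down, for each variable gadget, which side's cycle cover is used, and force every clause external edge to be present precisely when its literal is falsified by $\alpha$. So the set of clause external edges used inside each clause gadget is determined by $\alpha$: it is the set of literals of that clause that $\alpha$ falsifies. By part (b) of \lref[Lemma]{lem:var-clause}, a cycle cover of a clause gadget using exactly that prescribed subset exists (and is unique) if and only if the subset is a \emph{proper} subset of the three external edges — i.e.\ if and only if at least one literal of the clause is satisfied by $\alpha$. Hence: if $\alpha$ is satisfying, every clause has $\leq 2$ falsified literals, each clause gadget admits exactly one compatible cycle cover, and the variable gadgets are already forced, so there is exactly one consistent cycle cover $C$ with $\alpha_C = \alpha$, and it has weight $1$. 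If $\alpha$ is not satisfying, some clause has all three literals falsified, the required subset $S$ is the full set of external edges, and by part (b) no cycle cover of that clause gadget uses all three — so there is no consistent cycle cover $C$ with $\alpha_C = \alpha$.

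Finally I would assemble the count: the consistent cycle covers of $G^0_\varphi$ are in bijection with the satisfying assignments of $\varphi$ (via $C \mapsto \alpha_C$), and each such $C$ has weight $1$. Therefore the total weight of consistent cycle covers equals the number of satisfying assignments of $\varphi$, as claimed. I expect the only real care needed is bookkeeping: verifying that the notion of ``consistent'' exactly matches the constraint ``a clause external edge is used iff its literal is falsified'' for \emph{all} occurrences of a literal simultaneously, and checking that disjointness of the gadgets genuinely makes the per-gadget choices independent so that the global cycle cover is just the union of local ones — both of which follow directly from \lref[Lemma]{lem:var-clause} and the disjoint structure of $G^0_\varphi$, with no computation beyond tracking which edges lie in which gadget.
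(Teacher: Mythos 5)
Your proposal is correct and follows exactly the paper's (much terser) argument: the natural bijection between satisfying assignments and consistent cycle covers, combined with the fact from Lemma~\ref{lem:var-clause} that every relevant gadget cycle cover has weight $1$. Your more detailed case analysis of why unsatisfying assignments yield no consistent cover (the clause gadget has no cycle cover using all three external edges) is just an expansion of what the paper leaves implicit.
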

\begin{proof}
This follows from combining the natural bijection between satisfying assignments and consistent cycle covers and the fact from \lref[Lemma]{lem:var-clause} that every cycle cover of a clause gadget has weight $1$.
\end{proof}

Of course, nothing about $G^0_\varphi$ guarantees that a cycle cover must be consistent, and in fact many inconsistent covers exist.
To fix this, we need to use the critical XOR gadgets to obtain the final graph $G_\varphi$.

The graph $G_\varphi$ is constructed as shown in \lref[Figure]{fig:combined-graph} (left). It has the same $n$ variable gadgets and $m$ clause gadgets as $G^0_\varphi$, with the gadget for each variable $x_k$ having as many True external edges as there are instances of $x_k$ in $\varphi$, and as many False external edges as there are instances of $\overline{x_k}$. Now, however, for each appearance of a literal $x_k$ or $\overline{x_k}$ in a given clause, an XOR gadget is used to replace the corresponding external edge in that clause gadget and a distinct external edge on the appropriate side of the variable gadget for $x_k$. The role of the XOR gadgets is to neutralize the inconsistent cycle covers of $G^0_\varphi$ while still maintaining the property that each satisfying assignment of $\varphi$ contributes the same to the total weight of cycle covers. This leads to the description of the final graph $G_\varphi$ itself.

We now state the important properties of the XOR gadget, the key component of Valiant's proof.
\begin{lemma}\label{lem:XOR}
Suppose a graph $G$ contains edges $(u,u')$ and $(v,v')$, with all four vertices distinct.
Suppose now that the edges $(u,u')$ and $(v,v')$ are replaced by an XOR gadget as shown in \lref[Figure]{fig:gadgets}, resulting in a new graph $G'$ (with four new vertices $a,b,c$ and $d$).
Let $\CC_{u \setminus v}$ be the set of cycle covers containing $(u,u')$ but not $(v,v')$, and $w_{u\setminus v} = \sum_{C \in \CC_{u \setminus v}}w(C)$ be their total weight. Let $\CC_{v \setminus u}$ and $w_v = \sum_{C \in \CC_{v \setminus u}}w(C)$ be defined analogously.
Then there exist two disjoint sets of cycle covers of $G'$ with total weight $4w_{u\setminus v}$ and $4w_{v\setminus u}$, while all cycle covers of $G'$ not in these sets have total weight $0$.
\end{lemma}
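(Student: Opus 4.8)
The plan is to carry out the standard \emph{interface analysis} of the XOR gadget. Write $H$ for the gadget: the four new vertices $a,b,c,d$ together with all edges of $G'$ incident to them (the internal gadget edges, and the edges joining $u,u',v,v'$ to the gadget that replaced the two halves of the deleted edges $(u,u')$ and $(v,v')$). First I would fix an arbitrary cycle cover $C'$ of $G'$ and decompose it along the boundary of $H$. Each of $a,b,c,d$ has in- and out-degree exactly one in $C'$, and all of their incident edges lie in $H$; hence the edges of $C'$ lying in $H$ form a vertex-disjoint union of directed cycles on subsets of $\{a,b,c,d\}$ and directed simple paths, and every such path has both endpoints in $\{u,u',v,v'\}$ (each of $a,b,c,d$ has both of its $C'$-edges inside $H$), in fact running from $\{u,v\}$ to $\{u',v'\}$ by the orientation of the gadget's boundary edges. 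Deleting these $H$-edges from $C'$ leaves a set $C'_{\mathrm{ext}}$ of edges of the original graph $G$ (the two graphs agree outside $H$, except that $(u,u')$ and $(v,v')$ were removed) that covers every vertex of $G$ except that the endpoints of the gadget paths are each missing exactly one of their two edge-slots.

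Next I would record, for each $C'$, its \emph{interface type}: which of the four gadget edges at $u,u',v,v'$ are used, and how the internal paths pair them up. Since each of $u,u',v,v'$ touches exactly one gadget edge, the possible types form a short finite list: the two ``straight'' types $\tau_u$ (a single path $u\leadsto u'$, with the $v$- and $v'$-edges of the gadget unused) and $\tau_v$ (a single path $v\leadsto v'$, with the $u$- and $u'$-edges unused), the ``crossed'' types ($u\leadsto v'$, or $v\leadsto u'$, or both), the doubly-straight type ($u\leadsto u'$ together with $v\leadsto v'$), and the empty type (no gadget edges used). For an interface type $\tau$, let $N(\tau)$ be the sum, over all ways of completing $\tau$ to a cover of $\{a,b,c,d\}$ by the remaining internal paths and cycles, of the product of the weights of the internal edges used. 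The heart of the proof is then the finite computation, directly from the edge weights displayed in \lref[Figure]{fig:gadgets}, that
\[
N(\tau_u)=N(\tau_v)=4,\qquad N(\tau)=0\ \text{for every other interface type }\tau,
\]
where for the non-straight types the vanishing comes from the negative internal edge weights (and several of these types turn out to be combinatorially impossible, so $N(\tau)=0$ for free). I expect this bookkeeping --- enumerating the internal completions of each type exactly once and adding their weights correctly --- to be the only genuine obstacle; it is entirely mechanical.

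Finally I would assemble the global count. Partition the cycle covers of $G'$ by interface type, and take the two advertised sets to be $\mathcal D_u=\{C' : C'\text{ has type }\tau_u\}$ and $\mathcal D_v=\{C' : C'\text{ has type }\tau_v\}$; these are disjoint since the interface type is determined by $C'$. For any fixed interface type $\tau$ there is a bijection between the cycle covers of $G'$ of type $\tau$ and the pairs consisting of a partial cover of $G$ compatible with $\tau$ and a $\tau$-completion of $H$, under which $w(C')=w(C'_{\mathrm{ext}})\cdot w(\text{completion})$ since weight is multiplicative over disjoint edge-sets; summing shows that the covers of type $\tau$ have total weight $N(\tau)$ times a quantity that does not depend on the gadget internals, hence total weight $0$ whenever $\tau\notin\{\tau_u,\tau_v\}$. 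For type $\tau_u$, adding the deleted edge $(u,u')$ (of weight $1$) back to $C'_{\mathrm{ext}}$ produces a genuine cycle cover $C$ of $G$; since $(v,v')\notin G'$ while $v$'s out-edge and $v'$'s in-edge in $C'$ are ordinary $G$-edges, $C$ contains $(u,u')$ but not $(v,v')$, so $C\in\CC_{u\setminus v}$, this map is onto $\CC_{u\setminus v}$, and the covers $C'$ mapping to a fixed $C$ carry total weight $w(C)\cdot N(\tau_u)$. Summing over $C\in\CC_{u\setminus v}$ gives total weight $N(\tau_u)\,w_{u\setminus v}=4w_{u\setminus v}$ for $\mathcal D_u$, and symmetrically $\mathcal D_v$ has total weight $4w_{v\setminus u}$. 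Since these two disjoint families account for the only two interface types with $N(\tau)\neq 0$ while every remaining cover contributes $0$, this is exactly the assertion of the lemma.
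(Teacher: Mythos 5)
Your proposal is correct, and it is essentially the argument the paper itself uses: the paper omits the proof of this lemma, but its detailed proof of the matrix-weighted analogue (\lref[Lemma]{lem:xor-new}) follows exactly your scheme --- classify cycle covers of $G'$ by how they traverse the gadget, reduce the ``good'' classes to cycle covers of $G$ via a weight-preserving correspondence, evaluate the gadget contributions as the six relevant (sub)permanents/minors of the $4\times 4$ gadget matrix, and kill the remaining classes by grouping them into equivalence classes whose internal factor vanishes. The only thing worth noting is that your step ``adding the deleted edge $(u,u')$ (of weight $1$)'' silently uses the hypothesis that the two replaced edges have weight $1$; this is indeed how the gadget is always applied, and the paper makes that hypothesis explicit in its version of the lemma.
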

The proof is omitted, as we will state and prove our own modified version of this in \lref[Section]{sec:2x2}.

\begin{figure}
  \begin{center}
  \includegraphics[width=5.9in]{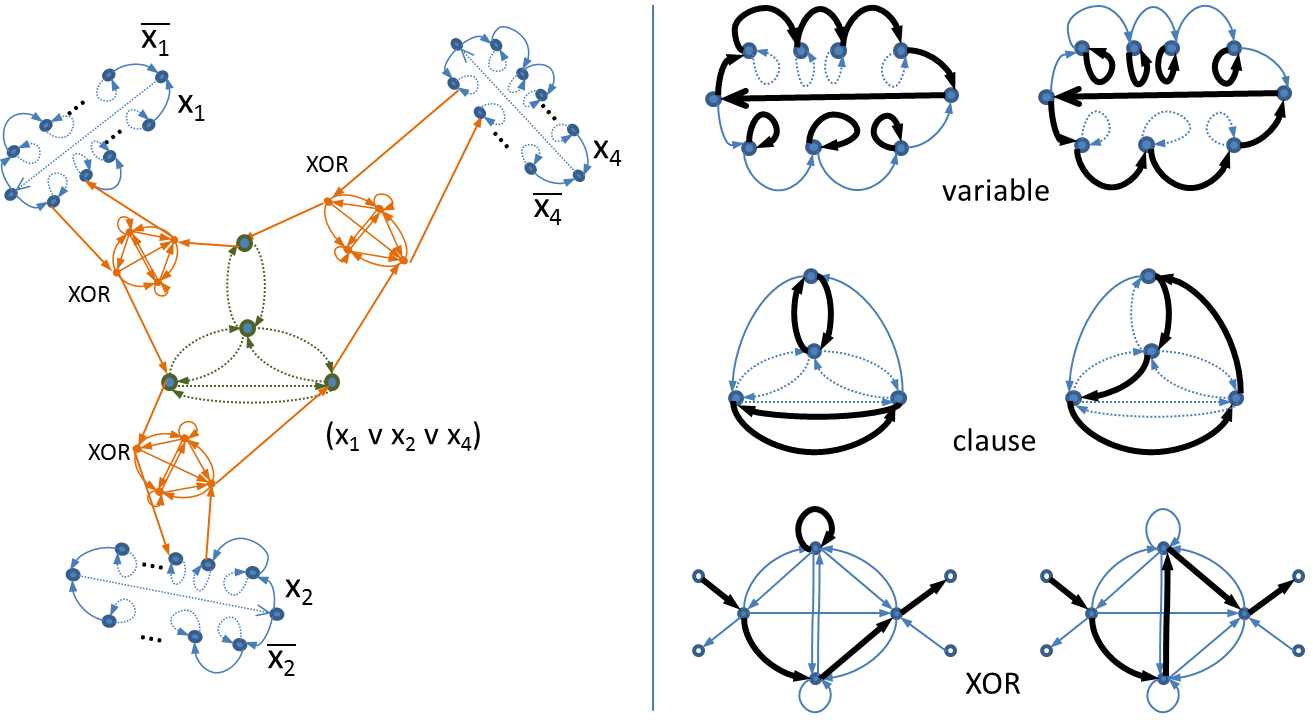}\\
  \caption{{\bf Left:} Subgraph of $G_\varphi$ corresponding to clause $(x_1 \vee x_2 \vee \overline{x_4})$, with the clause gadget in center. Three variable gadgets are connected to the clause gadget via XOR gadgets. {\bf Right:} Examples of how gadgets may have cycle covers of different sign.}\label{fig:combined-graph}
  \end{center}
\end{figure}
This leads to the following:
\begin{theorem}\label{thm:valiant}{\bf [Valiant]}
Given a 3-SAT formula $\varphi$ and the graph $G_\varphi$ as described, $\per(G_\varphi) = 4^{3m}S$, where $S$ is the number of satisfying assignments of $\varphi$.
\end{theorem}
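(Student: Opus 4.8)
The plan is to obtain $G_\varphi$ from $G^0_\varphi$ by splicing in the $3m$ XOR gadgets one gadget at a time, use \lref[Lemma]{lem:XOR} to track how the total weight of cycle covers changes under each splice, and then identify the cycle covers that ``survive'' all the splices with the consistent cycle covers of $G^0_\varphi$, whose total weight is $S$ by \lref[Lemma]{lem:consistent}. To set up, I take $G^0_\varphi$ to have exactly the external edges used in building $G_\varphi$: for each variable $x_k$, one True-side external edge for every occurrence of the literal $x_k$ and one False-side external edge for every occurrence of $\overline{x_k}$. There are then $3m$ variable-side external edges, naturally paired with the $3m$ clause-side external edges (the $j$-th external edge of a clause gadget is matched with a fresh external edge on the appropriate side of the variable gadget for the $j$-th literal of that clause). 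Enumerating these pairs as $e_1,\dots,e_{3m}$, set $G_0=G^0_\varphi,\ G_1,\dots,\ G_{3m}=G_\varphi$, where $G_i$ is obtained from $G_{i-1}$ by replacing the pair $e_i$ with an XOR gadget as in \lref[Figure]{fig:gadgets}.

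The core step is an induction on $i$ showing that $\per(G_i)$ equals $4^i$ times the total weight of those cycle covers of $G^0_\varphi$ that use exactly one edge from each of the first $i$ pairs. For the inductive step I apply \lref[Lemma]{lem:XOR} to the single splice $G_{i-1}\to G_i$ of the pair $e_i=((u,u'),(v,v'))$, which gives $\per(G_i)=4\big(w_{u\setminus v}+w_{v\setminus u}\big)$, i.e.\ $4$ times the total weight of cycle covers of $G_{i-1}$ that use exactly one edge of $e_i$. To chain this with the induction hypothesis I use that the weight correspondence behind \lref[Lemma]{lem:XOR} is \emph{local}: it only rearranges edges inside the newly inserted XOR gadget and on the two replaced edges, hence commutes with conditioning on whether any fixed edge lying outside that gadget is used --- in particular the edges of the earlier pairs $e_1,\dots,e_{i-1}$, which sit in $G_{i-1}$ untouched. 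Running the lemma restricted to cycle covers that additionally use exactly one edge from each of $e_1,\dots,e_{i-1}$ then upgrades it to the statement wanted. (Concretely, one carries, for each intermediate graph, the total weight of cycle covers with a prescribed usage pattern on the pairs not yet replaced, and unrolls the resulting recursion.) Taking $i=3m$ yields $\per(G_\varphi)=4^{3m}\,w^{*}$, where $w^{*}$ is the total weight of cycle covers of $G^0_\varphi$ using exactly one edge from every pair.

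It remains to show $w^{*}=S$. By \lref[Lemma]{lem:var-clause}(a), the restriction of any cycle cover of $G^0_\varphi$ to the variable gadget of $x_k$ is either its True or its False cycle cover, and uses \emph{all} external edges on one side of that gadget and none on the other. Together with the pairing, ``exactly one edge of $e_i$ is used'', for the pair $e_i$ coming from an occurrence of a literal $\ell$ in a clause, says precisely that the clause external edge for that occurrence is used if and only if $\ell$ is falsified by the truth assignment the cover records on the variable gadgets. This is exactly the defining condition of a consistent cycle cover, so the cycle covers counted by $w^{*}$ are precisely the consistent ones, and $w^{*}=S$ by \lref[Lemma]{lem:consistent}. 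Hence $\per(G_\varphi)=4^{3m}S$.

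I expect the main obstacle to be the inductive step: one has to be careful that \lref[Lemma]{lem:XOR}'s cancellation of the ``bad'' cycle covers and its factor-$4$ effect persist after earlier XOR gadgets have already been spliced in --- i.e.\ that each XOR gadget's local analysis is genuinely independent of the rest of the graph, so that the $3m$ gadgets can effectively be handled in parallel. The other ingredients --- fixing the external-edge pairing and matching ``one edge per pair'' with the definition of consistency --- are bookkeeping on top of Lemmas \ref{lem:var-clause}, \ref{lem:consistent} and \ref{lem:XOR}.
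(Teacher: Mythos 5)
Your proof is correct and follows essentially the same route as the paper, which in fact only sketches this argument (splicing in the $3m$ XOR gadgets one at a time, gaining a factor of $4$ per gadget while the inconsistent covers cancel, then invoking \lref[Lemma]{lem:consistent}). Your additional care --- strengthening the induction to track edge-usage patterns on the not-yet-replaced pairs, and checking that ``exactly one edge per pair'' coincides with consistency via \lref[Lemma]{lem:var-clause}(a) --- correctly fills in the details the paper omits.
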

We omit the formal proof, but give some of the intuition. Beginning with $G^0_\varphi$, we begin adding XOR gadgets one at a time.
When a pair of edges is replaced by an XOR gadget, any cycle covers that are consistent with respect to that pair of edges are turned into a set of cycle covers whose total weight is a factor of $4$ more than the original weight. All other cycle covers in the new graph have total weight $0$. This continues until each of the $3m$ XOR gadgets are added, at which point the original consistent cycle covers have become a set of cycle covers with total weight $4^{3m}$ while all other cycle covers in the final graph have weight $0$. The total weight of the cycle covers in the final graph is therefore $4^{3m}S$, as required.

\subsection{Our construction}\label{sec:2x2}
We now prove the following:
\begin{theorem}\label{thm:2x2}
Let $\F$ be a field of characteristic $p\geq 0$. If $p=0$,
computing $\det_{M_2(\F)}$ is $\numP$-hard. On the other hand, if
$p>0$ and odd, then computing $\det_{M_2(\F)}$ is $\mathrm{Mod}_pP$-hard.
\end{theorem}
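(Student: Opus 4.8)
The plan is to run Valiant's reduction (\lref[Theorem]{thm:valiant}) almost verbatim, but over $M_2(\F)$ instead of over $\F$: we build the same graph $G_\varphi$ out of variable, clause, and (suitably modified) XOR gadgets, except that edge weights are now allowed to be $2\times 2$ matrices, with all but a handful of edges still carrying the weight $I$. By \eqref{eqn:det}, $\det(M(G_\varphi))$ is the sum over cycle covers $C$ of $\sgn(C)\,w(C)$, where $w(C)\in M_2(\F)$ is the product of the edge weights of $C$ in source order, and $\sgn(C)=(-1)^{n-c}$ depends on the number $c$ of cycles. The extra factor $\sgn(C)$ is the only new feature relative to the permanent, and the whole game is to choose the matrix weights so that (i) the ``inconsistent'' configurations that \lref[Lemma]{lem:XOR} kills for the permanent now have \emph{signed} matrix weight summing to $0$ in $M_2(\F)$, and (ii) all cycle covers associated to a given satisfying assignment carry the same sign, hence still contribute a common matrix value.

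I would first revisit the analysis of $G^0_\varphi$ (\lref[Lemma]{lem:var-clause}, \lref[Lemma]{lem:consistent}) keeping track of signs. Inserting the four new vertices of an XOR gadget changes $n$ by an even number, so signs are governed entirely by how routing through a gadget affects the cycle count $c$. For the ``good'' routings --- the $u$-mode and $v$-mode of an XOR gadget, and the True/False cycle covers of a variable gadget --- one can normalize the gadgets (e.g.\ by padding the two sides of each variable gadget to the same number of external edges and adding self-loops as needed) so that every consistent cycle cover of $G_\varphi$ corresponding to a satisfying assignment has the \emph{same} number of cycles. This is a finite, gadget-local check and settles point (ii).

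The crux --- and the step I expect to be the main obstacle --- is designing the XOR gadget over $M_2(\F)$ so that point (i) holds. In Valiant's original gadget the cancellation of the ``both edges present'' and ``neither edge present'' configurations is arranged with scalar weights such as $-1,2,3$; but these bad configurations create extra \emph{internal} cycles in the gadget, so each such term now picks up an unpredictable $\pm 1$ from $\sgn(C)$ and the scalar cancellation breaks. The fix is to replace the relevant internal edge weights by $2\times 2$ matrices chosen so that any bad traversal is forced to contain a vanishing product: concretely one uses a pair $A,B\in M_2(\F)$ with $AB=0$ but $BA\neq 0$ (such a pair exists precisely because $M_2(\F)$ is noncommutative --- in a commutative algebra $AB=BA$ --- which is the ``just enough noncommutativity'' being exploited), positioned so that the $u$-mode and the $v$-mode each traverse only a single well-behaved product while the both/neither configurations are forced to contain the product $AB$ in the wrong order and therefore evaluate to $0$, \emph{irrespective} of the accompanying sign. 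With this in hand I would prove the matrix analog of \lref[Lemma]{lem:XOR}: after the substitution, the cycle covers of $G'$ split into two families of total signed weight $4\,w_{u\setminus v}\,N$ and $4\,w_{v\setminus u}\,N$ for a fixed invertible $N\in M_2(\F)$ (where $w_{u\setminus v}, w_{v\setminus u}$ are now matrices), with everything else contributing $0$. Iterating over the $3m$ XOR insertions exactly as in the proof of \lref[Theorem]{thm:valiant} then gives $\det(M(G_\varphi)) = \varepsilon\,4^{3m}\,S\,N'$ for a fixed invertible $N'\in M_2(\F)$ and a fixed sign $\varepsilon$, where $S$ is the number of satisfying assignments of $\varphi$.

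To finish, given an oracle for $\det_{M_2(\F)}$ one computes $\det(M(G_\varphi))$, multiplies by $(N')^{-1}$, and reads off $\varepsilon\,4^{3m}\,S$ (the sign $\varepsilon$ and the matrix $N'$ being fixed once and for all by running the reduction on trivial instances). When $\F$ has characteristic $0$ this recovers $S$ exactly, so $\det_{M_2(\F)}$ is $\numP$-hard; when $\F$ has odd characteristic $p$, the integer $4$ is invertible modulo $p$, so one recovers $S \bmod p$, which is $\mathrm{Mod}_p P$-hard. Characteristic $2$ must be excluded because there $4^{3m}=0$ makes the reduction vacuous (and in any case $\det$ and $\per$ coincide in characteristic $2$, so there is nothing new to extract this way). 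The construction is evidently polynomial time, which completes the reduction.
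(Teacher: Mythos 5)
Your high-level strategy coincides with the paper's: rerun Valiant's reduction with weights in $M_2(\F)$, fix the signs gadget-by-gadget, and prove a matrix analogue of \lref[Lemma]{lem:XOR}. But the proposal leaves the entire technical core as an unconstructed black box, and the specific properties you ascribe to that black box are not the ones that actually hold. First, your sign normalization for point (ii) --- ``pad so that every consistent cycle cover has the same number of cycles'' --- cannot work for the clause gadget: whether a clause gadget is covered by one cycle or two depends on which subset of its external edges the cycle cover uses, which varies across satisfying assignments, so no amount of padding equalizes the cycle count. The paper instead compensates with a weight of $-I_2$ on the one internal edge that appears exactly in the two-cycle covers; padding only handles the variable gadgets. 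Second, and more seriously, your XOR gadget is only postulated. You propose killing the bad (both/neither) configurations by forcing each such traversal to contain a product $AB=0$ with $BA\neq 0$; the paper's gadget works by a different mechanism --- the bad configurations do not vanish individually, but the \emph{sums} over all internal completions in each equivalence class vanish, because the relevant minors of the gadget's adjacency matrix (built from the matrices $X,Y,Z$ of \eqref{eqn:xyz}) have determinant $0$ (\lref[Lemma]{lem:dets}). Whether a gadget with your stronger term-by-term vanishing property exists is exactly the kind of claim that needs an explicit construction and verification; asserting it does not constitute a proof.

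Third, your bookkeeping at the end assumes both good modes of the gadget contribute $4N$ for a single fixed invertible $N$, so that $\det(M(G_\varphi))=\varepsilon\,4^{3m}S\,N'$. In the paper's construction the two modes contribute $4I_2$ and $4J_2$ respectively, so different satisfying assignments end up contributing $4^{3m}I_2$ or $4^{3m}J_2$, and the final determinant is a mixture $aI_2+bJ_2$ with only $a+b=4^{3m}S$ determined; one recovers $S$ by summing the entries of a row, not by multiplying by a fixed inverse calibrated on trivial instances. Your recovery step is therefore contingent on an unproved uniformity of the gadget. (There is also a positional subtlety you gloss over: the gadget's internal weights must sit in a controlled position of the noncommutative product $\prod_i m_{i,C(i)}$ for the factor to pull out at all --- the paper arranges this by numbering the four new vertices last.) In short: correct skeleton, but the clause-gadget sign fix is wrong as stated, and the XOR gadget --- the one place where the noncommutativity of $M_2(\F)$ must actually be exploited and verified --- is missing.
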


Our proof is also a reduction from \#3SAT (or $\mathrm{Mod}_p$-SAT in
the case of positive odd characteristic) and is based on Valiant's framework as described in the previous subsection. Given a 3SAT formula $\varphi$, we wish to construct a directed graph $H_\varphi$ with weights belonging to $M_2(\F)$ such that the number of satisfying assignments of $\varphi$ can be computed from $\det(M(H_\varphi))$, as expressed in equation~\eqref{eqn:det} above. We will first describe the graph and then prove its correctness.

A very naive but instructive first try would be to simply use the graph $G_\varphi$ from Valiant's construction, replacing each edge weight $w \in \F$ with $wI$, where $I_2$ is the $2\times 2$ identity matrix. This fails, of course, because of the sign factor $\sgn(C)$ inside the summation, which is based on the parity of the number of cycles in $C$. The immediate problem is that each of the three types of gadgets could conceivably use an odd or even number of cycles. As shown in \lref[Figure]{fig:combined-graph} (right), variable gadgets may have a different number of self-loops on different sides; clause gadgets may use one or two cycles depending on which external edges are chosen; and XOR gadgets show similar behavior.

Fortunately, these problems can be overcome if we also allow ourselves to modify the edge weights, and crucially, use the noncommutative structure available in $M_2(\F)$. This results in the gadgets shown in \lref[Figure]{fig:new-gadgets}. We now define two graphs, a preliminary graph $H^0_\varphi$ and final graph $H_\varphi$, in analogy with $G^0_\varphi$ and $G_\varphi$ from \lref[Section]{sec:valiant}. The new graphs $H^0_\varphi$ and $H_\varphi$ will be constructed in the same manner as $G_\varphi$, only using the modified gadgets from \lref[Figure]{fig:new-gadgets} instead of the original gadgets in \lref[Figure]{fig:gadgets}.


The rough idea behind these gadgets is that with the new weights, each resulting cycle cover of a gadget of the ``wrong'' sign will have an extra $-1$ sign from its edge weights. The determinant is then essentially the same as the permanent. We now explain the changes in more detail.

\begin{figure}
\begin{center}
  \includegraphics[width=5.0in]{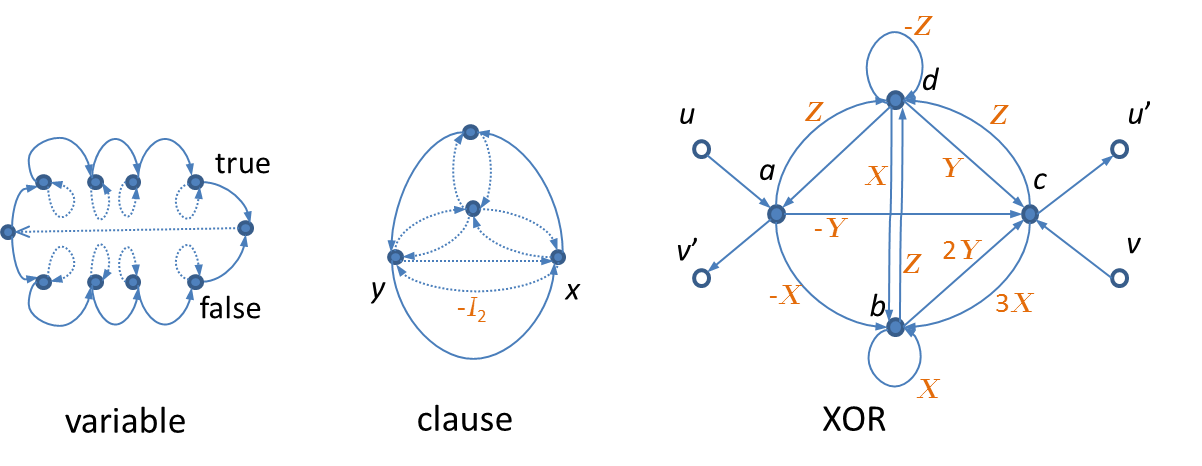}\\
  \caption{Modified gadgets.}\label{fig:new-gadgets}
\end{center}
\end{figure}

For variable gadgets, the fix is easy -- all we have to do is make sure that both sides of the gadget have an even (for example) number of vertices, and hence an even number of self loops. This can be accomplished by adding, if necessary, a new vertex and appropriate new edges on one or both sides. The new external edges, if any, will not be connected to any of the clause gadgets.

For clause gadgets, we need to address the problem that some cycle covers have only one cycle, while others have two. Here we benefit from the observation that one of the edges, $(x,y)$ in \lref[Figure]{fig:new-gadgets}, is used only in cycle covers with two cycles. Thus we can correct for parity by changing the sign of this edge from $I_2$ to $-I_2$; as a result, every cycle cover of a clause gadget has the same signed weight.

For XOR gadgets, simply changing the edge weights to scalar multiples of $I_2$ is insufficient. (Indeed, Valiant presciently noticed this in 1979!) However, we can save the construction by using more sophisticated matrix-valued edge weights instead. In particular, we define the following three $2\times 2$ matrices:
\begin{equation}\label{eqn:xyz}
X = \left(
       \begin{array}{cc}
         1 & 0 \\
         0 & -1 \\
       \end{array}
     \right);
Y = \left(
      \begin{array}{cc}
        0 & -1 \\
        -1 & 0 \\
      \end{array}
    \right);
Z = \left(
     \begin{array}{cc}
       0 & -1 \\
       1 & 0 \\
     \end{array}
   \right).
\end{equation}
We then modify the weights of the edges between vertices $a,b,c$ and $d$. Specifically, each edge entering vertex $b$ has its weight multiplied by $X$; each edge entering $c$ has its weight multiplied by $Y$, and each edge entering $d$ has its weight multiplied by $Z$.

For now, with $H_\varphi$ defined, we prove that computing $\det(H_\varphi)$ is equivalent to computing the number of satisfying assignments of $\varphi$. We first observe the following analogue of \lref[Lemma]{lem:consistent}.

\begin{lemma}\label{lem:new-consistent}
Let $\CC_\con$ be the set of all consistent cycle covers of $H^0_\varphi$. Then there exists $z \in \{1,-1\}$ such that for all $C \in \CC_\con$, we have $\sgn(C)w(C) = zI_2$.
\end{lemma}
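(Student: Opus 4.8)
The plan is to analyze a consistent cycle cover $C$ of $H^0_\varphi$ gadget by gadget, tracking separately the scalar sign $\sgn(C)$ (which depends only on the total number of cycles in $C$) and the matrix-valued weight $w(C) \in M_2(\F)$, and to show that the contributions of the two cancel so that the product is always the same $\pm I_2$. Since $H^0_\varphi$ is a disjoint union of the $n$ variable gadgets and $m$ clause gadgets, a consistent cycle cover $C$ decomposes as a union of one cycle cover in each gadget, so both $\sgn(C)$ and $w(C)$ factor across the gadgets: $\sgn(C) = \prod_{\text{gadgets } g} (-1)^{v_g - c_g}$ where the gadget has $v_g$ vertices and the chosen sub-cover uses $c_g$ cycles, and $w(C) = \prod_g w(C_g)$ in the appropriate (row) order. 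Because all the new edge weights are $\pm I_2$ except those inside XOR gadgets (of which $H^0_\varphi$ has none), every variable and clause gadget contributes a scalar multiple of $I_2$ to $w(C)$, so $w(C) = \eta I_2$ for some $\eta \in \{1,-1\}$, and $\sgn(C)w(C) = \sgn(C)\eta \, I_2$; it then remains to show $\sgn(C)\eta$ is the same for every consistent $C$.

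First I would handle the variable gadgets. By construction each side of a (modified) variable gadget has an even number of vertices and hence an even number of self-loops, and by \lref[Lemma]{lem:var-clause}(a) each of its two cycle covers consists of one long cycle plus all the self-loops on the other side; so the number of cycles used has a fixed parity independent of which of the True/False cover is chosen, and the weight is $+I_2$ (all edges have weight $I_2$). Thus a variable gadget contributes a fixed sign and $\eta$-factor $+1$ regardless of the variable's assignment. Next I would handle the clause gadgets. Here I invoke \lref[Lemma]{lem:var-clause}(b): for each proper subset $S$ of the three external edges there is a unique cycle cover using exactly $S$; the modification replaces the weight of the special edge $(x,y)$ by $-I_2$, and $(x,y)$ is used precisely in those covers with two cycles. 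So a cover with one cycle has weight $+I_2$ and contributes $(-1)^{v_g-1}$ to the sign, while a cover with two cycles has weight $-I_2$ and contributes $(-1)^{v_g-2}$ to the sign; in both cases the product $\sgn\text{-contribution} \times \eta\text{-contribution}$ equals $(-1)^{v_g-1}$, a fixed value depending only on the gadget, not on which $S$ occurs. Since in a consistent cycle cover the clause sub-covers are forced to track the variable assignment but the combined (sign $\times$ weight) contribution of every clause gadget is the same regardless, and likewise for variable gadgets, we conclude $\sgn(C)w(C) = zI_2$ with $z = \prod_g (\pm 1)$ a fixed element of $\{1,-1\}$, independent of $C \in \CC_\con$.

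The main obstacle is the clause-gadget parity bookkeeping: one must verify carefully, from the structure in \lref[Figure]{fig:new-gadgets}, that the edge $(x,y)$ really is used in exactly the two-cycle covers and in no one-cycle cover, and that no other edge weight in the clause gadget was changed, so that the $-I_2$ correction exactly compensates the parity discrepancy between one- and two-cycle covers. Once that is checked, everything else is the routine observation that disjoint gadgets multiply and that scalar matrices commute, so the order of multiplication in $w(C)$ is irrelevant at this stage (the noncommutative matrices $X,Y,Z$ only enter once the XOR gadgets are added to form $H_\varphi$, which is outside the scope of this lemma).
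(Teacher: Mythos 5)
Your proposal is correct and matches the paper's proof in substance: both rest on the facts that (i) the padded variable gadgets have an even number of self-loops on each side so their cycle count has fixed parity, (ii) the $-I_2$ weight on the clause edge $(x,y)$ exactly offsets the one-cycle/two-cycle parity discrepancy, and (iii) all weights in $H^0_\varphi$ are scalar multiples of $I_2$. The only cosmetic difference is that you do the bookkeeping gadget-by-gadget whereas the paper aggregates it globally via $\sgn(C)=(-1)^{n^0_H-c(C)}$ with $c(C)=p+m+q$ and $w(C)=(-1)^qI_2$.
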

\begin{proof}
As in the proof of \lref[Lemma]{lem:consistent}, there is a bijection between satisfying assignments of $\varphi$ and consistent cycle covers of $H^0_\varphi$. We need to show that each of these cycle covers has the same signed weight. For such a cycle cover $C \in \CC_\con$ we have $\sgn(C) = (-1)^{n^0_H - c(C)}$, where $n^0_H$ is the number of vertices in $H^0_\varphi$ and $c(C)$ is the number of cycles in $C$. We further know that $(-1)^{c(C)} = (-1)^{p + m + q}$, where $p$ is the number of cycles used to cover the $n$ variable gadgets, $m$ is the number of clauses, and $q$ is the number of times $C$ uses two cycles to cover a clause gadget. Since we assumed $p$ to be even, we have $\sgn(C) = (-1)^{n^0_H + m + q}$.

On the other hand, $w(C)$ is the product of the edge weights of $C$. All of these weights are $I_2$ except for the $w(x,y)$ in the clause gadget, which has weight $-I_2$ and shows up when $C$ uses two edges for a clause gadget. Thus $w(C) = (-1)^q I_2$, and $\sgn(C)w(C) = (-1)^{n^0_H + m}I_2$, which is independent of the cycle cover $C$. (Hence, $\sum_{C \in \CC_\con} \sgn(C)w(C) = (-1)^{n^0_H + m} SI_2$, where $S$ is the number of satisfying assignments of $\varphi$).
\end{proof}
Without loss of generality, we can assume from here on that the sign $z$ is positive, as we can insert a new vertex within an edge so that $n^0_H + m$ is even.

We now prove the following useful identities of XOR gadgets, which can be verified by hand:
\begin{lemma}\label{lem:dets}
Let $M_\XOR$ be the adjacency matrix for the XOR gadget, or
$$\left(
  \begin{array}{cccc}
    0 & -X & -Y & Z \\
    0 & X & 2Y & Z \\
    0 & 3X & 0 & Z \\
    I_2 & X & Y & -Z \\
  \end{array}
\right).$$ Letting $M_{i,j}$ indicate the minor of $M$ with row $i$ and column $j$ removed, we have (1) $\det(M_{3,1}) = -4I_2$, (2) $\det(M_{1,3}) = -4J_2$, (3) $\det(M) = \det(M_{1,1}) = \det(M_{3,3}) = \det(M_{13,13}) = 0$, where $J_2 = \left(
                                                                            \begin{array}{cc}
                                                                              0 & 1 \\
                                                                              1 & 0 \\
                                                                            \end{array}
                                                                          \right)$.
\end{lemma}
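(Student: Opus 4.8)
The plan is to compute all five determinants directly, exploiting the block structure of $M_\XOR$ and the algebraic relations among $X$, $Y$, $Z$. Before grinding through anything, I would first record the multiplication table of these matrices, since every minor is a sum of products of three of them times scalars. One checks immediately that $X^2 = Y^2 = I_2$, $Z^2 = -I_2$, $XY = -YX = -Z$, $YZ = -ZY = -X$, $ZX = -XZ = -Y$ (up to signs to be verified by hand), and also $XJ_2 = -J_2X$, $YJ_2 = J_2Y = -I_2$, etc.; these are the noncommutative identities that make the construction work, and having them tabulated turns each $2\times 2$-matrix-valued $3\times 3$ Cayley determinant into a short bookkeeping exercise.

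First I would handle the two nonzero minors. For $\det(M_{3,1})$, delete row $3$ and column $1$ to obtain the $3\times 3$ matrix with rows $(-X,\,-Y,\,Z)$, $(X,\,2Y,\,Z)$, $(X,\,Y,\,-Z)$; expanding the Cayley determinant along, say, the first row and carefully keeping the row order of the products, each of the six signed terms is a scalar times one of $XYZ$, $XZY$, $YXZ$, $YZX$, $ZXY$, $ZYX$, and reducing these via the multiplication table collapses the sum to $-4I_2$. The same procedure on $M_{1,3}$ (rows $(0,X,Z)$, $(0,3X,Z)$, $(I_2,X,Y)$, after deleting row $1$ and column $3$) gives $-4J_2$; here the presence of the $I_2$ in the lower-left and the mixing with $Y$ is what produces $J_2$ rather than $I_2$, which is exactly why the lemma distinguishes the two.

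Next, the four vanishing determinants. For $\det(M)$ itself I would expand along the first column, which has a single nonzero entry $I_2$ in position $(4,1)$, reducing $\det(M)$ to $\pm\det$ of the $3\times 3$ block in rows $1,2,3$ and columns $2,3,4$, namely rows $(-X,-Y,Z)$, $(X,2Y,Z)$, $(3X,0,Z)$; the six terms again reduce to scalar multiples of the six orderings of $X,Y,Z$ and cancel to $0$. For $\det(M_{1,1})$ and $\det(M_{3,3})$ one does the analogous $3\times 3$ computations directly, and for $\det(M_{13,13})$ (rows $2,4$, columns $2,4$, a $2\times 2$ Cayley determinant) one just checks $X\cdot(-Z) - Z\cdot X = 0$, which is immediate from $XZ = -ZX$.

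The main obstacle I anticipate is purely a matter of care rather than depth: in a noncommutative Cayley determinant the cofactor expansion must respect the row ordering of the factors, so the usual mnemonic for $3\times 3$ determinants has to be applied with the $2\times 2$ matrix factors multiplied left-to-right in increasing row index, and sign errors in the anticommutation relations $XY=-Z$, $YZ=-X$, $ZX=-Y$ propagate into wrong answers. I would therefore double-check each reduced product against an explicit $2\times 2$ matrix multiplication. No conceptual difficulty is expected; the lemma is exactly the ``verified by hand'' computation that underpins the XOR-gadget analysis in \lref[Section]{sec:2x2}.
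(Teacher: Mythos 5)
Your plan is exactly what the paper does --- the paper offers no proof beyond the remark that these identities ``can be verified by hand,'' and your direct row-ordered Cayley cofactor computation using the multiplication relations among $X,Y,Z$ (together with $J_2=-Y$) is precisely that verification; all five claimed values check out. Two transcription slips to correct before computing: $XY=Z$, not $-Z$ (it is $YX$ that equals $-Z$), and the minor $M_{1,3}$ has last row $(I_2,\,X,\,-Z)$ rather than $(I_2,\,X,\,Y)$ --- harmless here only because both permutations contributing nonzero terms send row $3$ to column $1$, so the $(3,3)$ entry of that minor never appears in the sum.
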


Now consider a graph $G$ with vertices labeled $1,\ldots,n_G$ and weights in $M_2(\F)$. Suppose $G$ contains vertex-disjoint edges $(u,u')$ and $(v,v')$, each with weight $I_2$. Suppose now that the edges $(u,u')$ and $(v,v')$ are replaced by an XOR gadget as shown in \lref[Figure]{fig:new-gadgets}. This results in a new graph $G'$, with four new vertices $a,b,c$ and $d$, which we number $n_G+1,\ldots,n_G+4$. We now define a mapping $\psi$ from $\CC(G)$ to subsets of $\CC'(G)$ as follows: Given cycle covers $C \in \CC(G)$ and $C' \in \CC(G')$, then $C' \in \psi(C)$ if and only if (1) for all edges $e \in C\setminus \{(u,u'),(v,v')\}$, we have $e \in C'$, (2) $(u,u') \in C$ if and only if $(u,a),(c,u') \in C'$, and (3) $(v,v') \in C$ if and only if $(v,c),(a,v') \in C'$.

This leads to the following analogue of \lref[Lemma]{lem:XOR}:
\begin{lemma}\label{lem:xor-new}
Let $\CC_{u\setminus v} = \{C \in \CC(G):(u,u') \in C, (v,v') \not\in C\}$ be the set of cycle covers of $G$ containing $(u,u')$ but not $(v,v')$, and $\CC_{v\setminus u} = \{C \in \CC(G):(v,v') \in C, (u,u') \not\in C\}$. Then there exists a mapping $\psi$ from $\CC(G)$ to subsets of $\CC'(G)$ such that
$\psi(C_1) \cap \psi(C_2) = \emptyset$ for all $C_1,C_2 \in \CC(G)$ and
(1) for any $C \in \CC_{u\setminus v}$, the total weight of $\psi(C)$ is $\sum_{C' \in \psi(C)} \sgn(C')w(C') = 4\sgn(C)w(C)$, (2) for any $C \in \CC_{v\setminus u}$, $\sum_{C' \in \psi(C)} \sgn(C)w(C') = 4\sgn(C)w(C)J_2$, and (3) the remaining cycle covers in $G'$ have total weight $\sum_{C' \not\in \psi(C) \forall C \in \CC_{u\setminus v} \cup \CC_{v\setminus u}}\sgn(C')w(C') = 0$.
\end{lemma}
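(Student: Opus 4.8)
\textbf{Proof plan for \lref[Lemma]{lem:xor-new}.}
The plan is to mimic Valiant's analysis of the XOR gadget, but carry the $M_2(\F)$-valued signed weights through the computation, using the identities of \lref[Lemma]{lem:dets} to control the contributions. First I would fix a cycle cover $C \in \CC(G)$ and describe $\psi(C)$ explicitly: outside the four new vertices $a,b,c,d$, every $C' \in \psi(C)$ agrees with $C$ on all edges except the two old edges $(u,u')$ and $(v,v')$, and on the four new vertices $C'$ ranges over all local configurations (partial cycle covers on $\{a,b,c,d\}$ together with possible ``pass-through'' edges $u\to a, c\to u'$ and $v\to c, a\to v'$) that are compatible with which of $(u,u'),(v,v')$ lie in $C$. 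Because the new vertices are only attached to the rest of the graph through $u,u',v,v'$, the weight of such a $C'$ factors as $w(C\setminus\{(u,u'),(v,v')\})$ times a ``local weight'' supported on the gadget, and the sign $\sgn(C')$ differs from $\sgn(C)$ by a parity that depends only on the local configuration. The disjointness $\psi(C_1)\cap\psi(C_2)=\emptyset$ for $C_1\neq C_2$ is then immediate, since $C'$ determines $C$ (just contract the gadget back to the two edges), and every cycle cover of $G'$ that is not in any $\psi(C)$ must route through the gadget in a way that does not correspond to either endpoint pattern.

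The core of the argument is a local computation on the gadget. Enumerate the possible ways a cycle cover of $G'$ can interact with the four new vertices: (i) it enters at $u$, traverses the gadget internally, and exits at $u'$ (this is the ``$(u,u')$ used'' pattern); (ii) symmetrically it enters at $v$ and exits at $v'$; (iii) it uses neither external connection, so the gadget is covered by an internal cycle cover alone; (iv) ``mixed'' patterns, e.g. enter at $u$, exit at $v'$, together with enter at $v$, exit at $u'$. Summing the signed local weights over the internal edges in each case is exactly a signed minor computation for $M_\XOR$: case (i) corresponds to $\det(M_{3,1})$ up to sign and a parity factor, case (ii) to $\det(M_{1,3})$, case (iii) to $\det(M_{1,1})$ or $\det(M_{3,3})$ (depending on bookkeeping), and case (iv) to $\det(M_{13,13})$ and $\det(M)$. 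By \lref[Lemma]{lem:dets}, cases (iii) and (iv) contribute $0$, establishing claim (3); case (i) contributes a factor $4$ (from $\det(M_{3,1})=-4I_2$ times the sign corrections), giving claim (1) with the clean scalar $4$; and case (ii) contributes $-4 J_2$, which after the sign corrections becomes the promised $4\sgn(C)w(C)J_2$, giving claim (2). Throughout, I would track the two subtle bookkeeping points: the sign change $(-1)^{n_G - c}$ versus $(-1)^{n_G+4 - c'}$ introduces a fixed $(+1)$ from the four new vertices, and the ordering convention for $w(B)$ by source vertex means the gadget's matrix factor is multiplied into $w(C)$ in a fixed position determined by the labels $n_G+1,\dots,n_G+4$ — this is why the $J_2$ in case (ii) cannot be absorbed and must be stated separately.

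The main obstacle I anticipate is precisely this noncommutative bookkeeping: unlike in Valiant's scalar setting, the factor coming from the gadget does not simply multiply the rest of the weight but is inserted at a specific point in the ordered product $m_{1,C(1)}\cdots m_{n,C(n)}$, and the matrices $X,Y,Z$ do not commute, so I must be careful that the ``local weight'' I compute via \lref[Lemma]{lem:dets} really is the factor that appears in $w(C')$ in the right order. Concretely, since $a,b,c,d$ are numbered last ($n_G+1,\dots,n_G+4$), all the gadget edge weights get multiplied at the \emph{end} of the ordered product, so $w(C') = w(C\setminus\{(u,u'),(v,v')\}) \cdot (\text{gadget factor})$ with the gadget factor on the right — which is what makes the minors of $M_\XOR$ (rather than some conjugate) the correct quantities, and what forces the asymmetry between claims (1) and (2) (the $I_2$ versus $J_2$). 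Verifying that the enumerated local patterns exhaust all cycle covers of $G'$, and that the parity corrections are uniform across $\psi(C)$, is the other place where care is needed; \lref[Lemma]{lem:dets}(3) is exactly the input that kills all the patterns one would otherwise have to worry about. Once these points are pinned down, claims (1)–(3) follow by assembling the per-$C$ contributions and summing.
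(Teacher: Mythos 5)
Your plan follows essentially the same route as the paper's proof: define $\psi(C)$ by replacing $(u,u')$ and/or $(v,v')$ with the corresponding pass-through edges, factor each $w(C')$ into the common-edge part times a local gadget factor (placed at the end of the ordered product because $a,b,c,d$ carry the largest labels), and identify the signed local sums with the minors $\det(M_{3,1})$, $\det(M_{1,3})$, $\det(M)$, $\det(M_{1,1})$, $\det(M_{3,3})$, $\det(M_{13,13})$ of $M_\XOR$, so that \lref[Lemma]{lem:dets} yields the factors $4I_2$ and $4J_2$ and kills the leftover classes. The noncommutative ordering subtlety and the $(-1)$ parity correction you flag are exactly the points the paper handles, so the proposal is correct and matches the paper's argument.
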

\begin{proof}
We start with proving (1). Fix any $C \in \CC_{u\setminus v}$. Notice that $\psi(C)$ consists of all $C' \in \CC(G')$ that contain $(u,a)$, $(c,u')$ and all of $C$'s edges except $(u,u')$. Call this set of common edges $E_C$; by the assumption that $w(u,u') = I_2$, $w(E_C) = w(C)$
The set $\psi(C)$ consists of all possible ways of completing $E_C$ to a cycle cover $C'$ of $G'$ by adding edges to $G'$ so that every vertex has indegree and outdegree $1$. Within $E_C$, the only vertices with deficient degree are $a,b,c$ and $d$. Vertices $b$ and $d$ have indegree and outdegree $0$, while $a$ has indegree $1$ and outdegree $0$, and $c$ has indegree $0$ and outdegree $1$. Note that the edges $(u,a)$ and $(c,u')$ must belong to the same cycle in $C'$, and so the edges in $E_C$ form zero or more completed cycles and an incomplete cycle from $c$ to $a$. The number of completed cycles is $c(C) - 1$, where $c(C)$ is the number of cycles in $C$.

We thus need to add three edges matching the vertices $\{a,b,d\}$ to the vertices $\{b,c,d\}$; call these three edges $E_\XOR$, so that $E_C \cup E_\XOR$ forms a cycle cover $C'$. The weight of $C'$ is therefore $w(C') = w(E_C)w(E_\XOR)$. The sign of $C'$ is $(-1)^{n+4 - c(C')}$, where $c(C')$ is the number of cycles in $C'$. We can see that $c(C')$ is the sum of the number of completed cycles in $E_C$ and the number of cycles among $\{a,b,c,d\}$ assuming the existence of an edge from $c$ to $a$. Hence $c(C') = c(C) - 1 + c(E_\XOR \cup \{(c,a)\}$, and so $\sgn(C') = -\sgn(C)(-1)^{4 - c(E_\XOR \cup \{(c,a)\})} = -\sgn(C)\sgn(E_\XOR \cup \{(c,a)\})$.

Thus, $\sum_{C' \in \psi(C)}\sgn(C')w(C') = -\sgn(C)w(E_C)\sum_{C'\in \psi(C)} \sgn(E_\XOR \cup \{(c,a)\})w(E_\XOR) = \\ -\sgn(C)w(E_C)\det(M_{3,1})$. From \lref[Lemma]{lem:dets}, this is $4\sgn(C)w(E_C) = 4\sgn(C)w(C)$, as required.

The proof of (2) proceeds similarly, except that $\psi(C)$ contains $(v,c)$ and $(a,v')$ instead of $(u,a)$ and $(c,v')$. The set of common edges then has an incomplete path from $a$ to $c$. As a result, we end up with $\sum_{C' \in \psi(C)}\sgn(C')w(C') = -\sgn(C)w(E_C)\det(M_{1,3}) = 4\sgn(C)w(C)J_2$.

To prove (3), we observe that a cycle cover in $\CC(G')$ that contains $(u,a)$ and $(c,u')$ but not $(v,c)$ or $(a,v')$ must fall into $\psi(C)$ for some $C \in \CC_{u\setminus v}$; similarly, any cycle cover containing $(v,c)$ and $(a,v')$ but not $(u,a)$ or $c,u')$ must fall into $\psi(C)$ for some $C \in \CC_{v\setminus u}$. These were already accounted for in the proofs of (1) and (2), so we can concentrate only on the leftover cycle covers.  Partition these leftover cycle covers into equivalence classes based on their edge sets excluding edges wholly within $\{a,b,c,d\}$; namely $C'_1 \sim C'_2$ if and only if $C'_1 \setminus \{a,b,c,d\} \times \{a,b,c,d\} = C'_2 \setminus \{a,b,c,d\} \times \{a,b,c,d\}$. For any equivalence class, its cycle covers must either all (a) contain none of these four edges, (b) contain $(u,a)$ and $(a,v')$ only, (c) contain $(v,c)$ and $(a,v')$ only, or (d) contain all four edges.

Up to sign, the total weights of those equivalence classes in (a) contain a factor of $\det(M)$, those in (b) contain a factor $\det(M_{1,1})$, those in (c) contain a factor $\det(M_{3,3})$, and those in (d) contain a factor $\det(M_{13,13})$. From \lref[Lemma]{lem:dets}, all four of these determinants are $0$, and so the total weights of the cycle covers in any equivalence class is $0$, as is therefore the total weight of all the leftover cycle covers.
\end{proof}

With this in hand, we can prove the key result:
\begin{theorem}
Given a 3SAT formula $\varphi$ with $S$ satisfying assignments, let the graph $H_\varphi$ with weights in $M_2(\F)$ be as defined above. Then $\det(H_\varphi) = aI_2 + bJ_2$, where $a + b = 4^{3m}S$.
\end{theorem}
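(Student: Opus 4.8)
The plan is to obtain $H_\varphi$ from $H^0_\varphi$ by inserting the $3m$ XOR gadgets one at a time, tracking $\det(M(G))$ through each insertion by repeated application of \lref[Lemma]{lem:xor-new} — in direct analogy with the proof of Valiant's \lref[Theorem]{thm:valiant}, but keeping careful account of the noncommutative $J_2$ corrections — and then reading off the final $2\times 2$ matrix entrywise. I would start from \lref[Lemma]{lem:new-consistent} (with the subsequent WLOG), so that every consistent cycle cover $C_0$ of $H^0_\varphi$ satisfies $\sgn(C_0)w(C_0)=I_2$ and these covers biject with the $S$ satisfying assignments of $\varphi$. Since a consistent cover chooses the True or False cover of each variable gadget, it uses exactly one edge of each XOR pair (the clause-external edge versus the variable-external edge), so each $C_0$ lies on a well-defined side of each of the $3m$ pairs; write $k_{C_0}$ for the number of pairs on whose ``$v$-side'' — the side producing a right factor $J_2$ in \lref[Lemma]{lem:xor-new}(2) — the cover $C_0$ lies.

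The core step is the induction. Write $H^0_\varphi = G_0 \to G_1 \to \cdots \to G_{3m}=H_\varphi$, where $G_t$ replaces the $t$-th external-edge pair $(u_t,u_t'),(v_t,v_t')$ — still present in $G_{t-1}$ with weight $I_2$ and vertex-disjoint, because the gadgets do not interact — by an XOR gadget whose four new vertices receive the largest indices so far. Applying \lref[Lemma]{lem:xor-new} to $G_{t-1}\to G_t$ gives $\det(M(G_t)) = 4\sum_{C\in\CC_{u_t\setminus v_t}(G_{t-1})}\sgn(C)w(C) + 4\bigl(\sum_{C\in\CC_{v_t\setminus u_t}(G_{t-1})}\sgn(C)w(C)\bigr)J_2$, with all other covers of $G_t$ summing to $0$; moreover the vanishing of those other covers is \emph{local to the $t$-th gadget}, since for every fixed choice of edges outside $\{a_t,b_t,c_t,d_t\}$ the sum over internal completions equals, up to sign and a left factor, one of the zero determinants $\det(M),\det(M_{1,1}),\det(M_{3,3}),\det(M_{13,13})$ of \lref[Lemma]{lem:dets}. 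Because the local cancellations at distinct gadgets involve disjoint vertex sets, they remain valid after intersecting the running sum with any constraint on the edges of $H^0_\varphi$ — in particular with ``uses $(u_{t'},u_{t'}')$ but not $(v_{t'},v_{t'}')$'' for $t'\neq t$. Iterating back down to $G_0$ thus shows that the only cycle covers of $H_\varphi$ contributing nonzero signed weight are the $\psi$-images of the consistent covers of $H^0_\varphi$, and each consistent $C_0$ contributes $4^{3m}\sgn(C_0)w(C_0)J_2^{k_{C_0}}=4^{3m}J_2^{k_{C_0}}$: the $J_2$ factors from successive $v$-side gadgets accumulate on the right (each gadget's edges occupy the last row-positions at the moment it is inserted), they commute with one another, and $J_2^2=I_2$.

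Finally, summing over all consistent covers and grouping them by the parity of $k_{C_0}$ yields $\det(M(H_\varphi)) = 4^{3m}\bigl(S_{\mathrm{even}}I_2 + S_{\mathrm{odd}}J_2\bigr)$ with $S_{\mathrm{even}}+S_{\mathrm{odd}}=S$; taking $a=4^{3m}S_{\mathrm{even}}$ and $b=4^{3m}S_{\mathrm{odd}}$ gives $a+b=4^{3m}S$, as claimed. (Both $a$ and $b$, and hence $a+b$, are determined by any single entry of the matrix $aI_2+bJ_2$, which is exactly what the subsequent hardness reduction exploits.)

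The step I expect to be the main obstacle is precisely this composition of the $3m$ per-gadget transformations: one must check carefully that the ``leftover'' cancellations at different gadgets are mutually independent, so that restricting the running sum to covers compatible with a not-yet-processed pair does not destroy the vanishing, and that the noncommutative $J_2$ corrections are all applied on the same side and in a consistent order. Both points rest on the gadgets being vertex-disjoint and on the convention that inserted vertices get the highest indices; the permanent case (\lref[Theorem]{thm:valiant}) provides the template for the scalar bookkeeping, and the only genuinely new content is propagating the matrix factors correctly.
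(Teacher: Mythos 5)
Your proposal is correct and follows essentially the same route as the paper: start from Lemma~\ref{lem:new-consistent}, insert the $3m$ XOR gadgets one at a time, apply Lemma~\ref{lem:xor-new} at each step, and observe that the accumulated right-hand factors $J_2^{k}$ collapse to $I_2$ or $J_2$, giving $4^{3m}(S_{\mathrm{even}}I_2+S_{\mathrm{odd}}J_2)$. Your explicit bookkeeping of the $v$-side count $k_{C_0}$ and of the locality of the leftover cancellations is if anything more careful than the paper's own sketch.
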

\begin{proof}

The structure of the proof is similar to the sketch given after \lref[Theorem]{thm:valiant}, though with extra care needed for the complications of working with matrices. In the end, each cycle cover of $G$ ends up with weight $4^{3m}I_2$ or $4^{3m}J_2$, giving the result.

Let us start with $H^0_\varphi$, which we know from \lref[Lemma]{lem:new-consistent} has $\det(H^0_\varphi) = SI_2$. In particular, for each satisfying assignment of $\varphi$, there is a consistent cycle cover of $H^0_\varphi$ of weight $I_2$. There exist $3m$ pairs of edges in $H^0_\varphi$ that when replaced by XOR gadgets will convert $H^0_\varphi$ to $H_\varphi$; each of these pairs contains an external edge in a clause gadget and an external edge in a variable gadget referring to the same literal.

Consider what happens when we replace one of the above-mentioned edge pairs with an XOR gadget, forming a new graph $H^1_\varphi$. From \lref[Lemma]{lem:xor-new}, each cycle cover $C$ that is consistent on this edge pair in $H^0_\varphi$ will be mapped to $\psi(C)$, a set of cycle covers in the new graph whose total signed weight will either be $4I_2$ or $4J_2$. Further, since all of these sets $\psi(C)$ are disjoint and all other cycle covers have total signed weight $0$, the total signed weight of all cycle covers in $H^1_\varphi$ is $\sum_{C \in \CC^1_\con(G)}4K_2(C)$, where $\CC^1_\con(G)$ are those cycle covers of $G$ that are consistent on this edge pair, and $K_2(C)$ is either $I_2$ or $J_2$.

Now suppose a second edge pair is replaced with an XOR gadget, resulting in the graph $H^2_\varphi$. Consider a cycle cover $C$ of $H^0_\varphi$ in $\psi(C)$ that is consistent on both the first and second edge pairs. Then each cycle cover of $H^1_\varphi$ in $\psi(C)$ will be mapped to a set of cycle covers $\psi(\psi(C))$ of $H^2_\varphi$, with signed weight that is $4I_2$ or $4J_2$ multiple of its signed weight in $H^1_\varphi$. The set $\psi(\psi(C))$ therefore has total signed weight of either $16I_2$ or $16J_2$, since all of the images of $\psi$ are disjoint. Once again, the total signed weight of all cycle covers in $H^2_\varphi$ is $\sum_{C \in \CC^{1,2}_\con(G)}16K_2(C)$, where $\CC^{1,2}_\con(G)$ is the set of cycle covers of $G$ consistent on both edge pairs.

Carrying this out over all $3m$ edge pairs to reach $H_\varphi$, we see that every consistent cycle cover of $H^0_\varphi$ becomes a disjoint set of cycle covers in $H_\varphi$ of total signed weight $4^{3m}I_2$ or $4^{3m}J_2$, while all other cycle covers in $H_\varphi$ have total weight $0$. The total weight over all original consistent cycle covers is $\sum_{C \in \CC_\con(G)}4^{3m}K_2(C)$. This therefore takes the form given in the theorem.
\end{proof}

This completes the proof of  \lref[Theorem]{thm:2x2}.

\section{Computing the determinant over upper triangular matrix
algebras}\label{sec:upper}

In this section, we consider the problem of computing the determinant over the algebra of upper triangular matrices of dimension $d$. We show that the determinant over these algebras can be computed in time $N^{O(d)}$, where $N$ denotes the size of the input. We will then generalize this theorem to arbitrary algebras to yield \lref[Theorem]{thm_det_algo}.

Given a field $\F$, we denote by $U_d(\F)$ the algebra of upper triangular matrices of
dimension $d$ with entries from the field $\F$.
%
%
%
\begin{theorem}
	\label{thm_det_upper}
	Let $\F$ be a field. There exists a deterministic algorithm, which when given as input an
	$n\times n$ matrix $M$ with entries from $U_d(\F)$, computes the
	determinant of $M$ in time $\poly(N^{d})$, where $N$ is the size of the input.
\end{theorem}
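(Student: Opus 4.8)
The plan is to reduce the computation of $\det_{U_d(\F)}(M)$ to a $\poly(N^d)$-size collection of ordinary determinant computations over the base field $\F$, each solvable by Gaussian elimination. The point is that, because the entries $m_{ij}$ are upper triangular, each of the $\binom{d+1}{2}$ matrix-entries of the answer $\det(M)\in U_d(\F)$ can be written as a sum, over a small family of ``level sequences'', of \emph{scalar} determinants over $\F$.

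In detail, first fix a pair of levels $1\le k\le l\le d$ and examine the $(k,l)$ entry of $\det(M)=\sum_{\sigma\in S_n}\sgn(\sigma)\,m_{1,\sigma(1)}\cdots m_{n,\sigma(n)}$ (the product taken in row order, as this is the Cayley determinant). Expanding the product of upper triangular matrices, the $(k,l)$ entry of $m_{1,\sigma(1)}\cdots m_{n,\sigma(n)}$ equals $\sum_{t}\prod_{i=1}^n (m_{i,\sigma(i)})_{t_{i-1},t_i}$, where $t=(t_0,\dots,t_n)$ ranges over sequences with $t_0=k$ and $t_n=l$; since the factors are upper triangular, only the \emph{monotone} sequences $k=t_0\le t_1\le\cdots\le t_n=l$ contribute. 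Interchanging the two summations,
$$\det(M)_{kl}=\sum_{t}\ \sum_{\sigma\in S_n}\sgn(\sigma)\prod_{i=1}^n (m_{i,\sigma(i)})_{t_{i-1},t_i}=\sum_{t}\det\!\big(M'(t)\big),$$
where $M'(t)$ is the $n\times n$ matrix over $\F$ with $(i,j)$ entry $(m_{ij})_{t_{i-1},t_i}$. The crucial point making the inner sum an honest field determinant is that its factors $(m_{i,\sigma(i)})_{t_{i-1},t_i}$ are scalars, so their product is order-independent and $\sgn(\sigma)$ is exactly the determinant sign.

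Second, I would bound the number of relevant sequences $t$: a monotone sequence from $k$ to $l$ corresponds to non-negative integer increments $g_i=t_i-t_{i-1}$ with $\sum_i g_i=l-k\le d-1$, so there are $\binom{n+(l-k)-1}{l-k}\le (n+d)^d=\poly(N^d)$ of them. The algorithm is then: for each of the $\binom{d+1}{2}$ target pairs $(k,l)$ and each monotone sequence $t$, build $M'(t)$, compute $\det(M'(t))$ over $\F$ by Gaussian elimination in time $\poly(N)$, and sum the results to obtain $\det(M)_{kl}$; the total running time is $\poly(N^d)$.

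There is no real obstacle once the identity $\det(M)_{kl}=\sum_t\det(M'(t))$ is in hand; the one place to be a little careful is the combinatorial count, where enumerating monotone level sequences directly (rather than first choosing jump positions and then jump sizes) is what keeps the bound at $\poly(N^d)$ rather than $\poly(n^d)\cdot 2^{\Omega(d)}$. I would close by noting that the same ``level sequence'' mechanism is what drives the generalization to arbitrary algebras $A$ with $A/R(A)$ commutative and $R(A)^d=0$ (\lref[Theorem]{thm_det_algo}): $U_d(\F)$ is the clean special case where the radical is the strictly upper triangular part and the standard matrix basis exhibits the triangular structure, hence the level sequences, explicitly.
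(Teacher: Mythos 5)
Your proposal is correct and is essentially the paper's own argument: expand each $(k,l)$ entry of the Cayley determinant over monotone level sequences, swap the order of summation, and observe that each inner sum is a scalar determinant $\det(M'(t))$ over $\F$, of which there are only $n^{O(d)}$ to compute. The only additions are cosmetic (the explicit binomial count of level sequences and the remark that scalar commutativity is what makes the inner sum a genuine field determinant), so nothing further is needed.
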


\begin{proof}
	The algorithm is simple. We write out the expression for the
	determinant of $M$ and note that each entry of $\det(M)$ may be
	written as the sum of $n^{O(d)}$ many determinants of matrices with
	entries from the underlying field. Since each of these can be
	computed in time $N^{O(1)}$, we obtain an $N^{O(d)}$-time algorithm
	for our problem.

	Let $M = (m_{i,j})_{i,j}$, where $m_{i,j}\in U_d(\F)$ for each $i,j\in
	[n]$. Given $m\in U_d(\F)$, we use $m(p,q)$ to denote the $(p,q)$th
	entry of $m$. We have
	\[
	\det(M) = \sum_{\sigma\in S_n} \sgn(\sigma)
	m_{1,\sigma(1)}m_{2,\sigma(2)}\cdots m_{n,\sigma(n)}
	\]

	Consider a product of matrices $m = m_1\cdots m_n$ where each
	$m_i\in U_d(\F)$. For $p,q\in [d]$ such that $p\leq
	q$, we may write the $(p,q)$th entry of $m$ as
	\begin{align}
	m(p,q) &= \sum_{k_1,k_2,\ldots,k_{n-1}\in [d]} m_1(p,k_1)m_2(k_1,k_2)\cdots
	m_n(k_{n-1},q)\notag\\
	&= \sum_{p\leq k_1\leq \cdots \leq k_{n-1}\leq q}m_1(p,k_1)m_2(k_1,k_2)\cdots
	m_n(k_{n-1},q)\label{eq_det_u}
	\end{align}
	where the last equality follows since $m_i(k,l) = 0$ unless $k\leq
	l$. Note that the number of terms in the summation in \eqref{eq_det_u} is equal to the number of increasing sequences of
	length $n$ consisting of elements from $[d]$ and is bounded by
	$n^{O(d)}$.

	Fix any $p,q\in [d]$ such that $p\leq q$. By \eqref{eq_det_u}, we may write $\det(M)(p,q)$ as
	\begin{equation}
	\det(M)(p,q) = \sum_{p\leq k_1\leq \cdots \leq k_{n-1}\leq
	q}\sum_{\sigma\in S_n}\sgn(\sigma)\cdot
	m_{1,\sigma(1)}(p,k_1)\cdot m_{2,\sigma(2)}(k_1,k_2)\cdots
	m_{n,\sigma(n)}(k_{n-1},q)\label{eq_det_u_2}
	\end{equation}

	We now note that each of the inner summations may be written as the
	determinant of an appropriate matrix over the underlying field. Fix
	any $\overline{k}= (k_1,\ldots,k_{n-1})$ satisfying $p\leq k_1\leq
	k_2 \leq \cdots \leq k_{n-1}\leq q$. Denote by
	$M_{\overline{k}}$ the matrix $(m_{i,j}(k_{i-1},k_i))_{i,j}$, where
	$k_0$ denotes $p$ and $k_1$ denotes $q$. It follows from \eqref{eq_det_u_2} that $\det(M)(p,q) = \sum_{\overline{k}}
	\det(M_{\overline{k}})$.

	Note that the matrices $M_{\overline{k}}$ are $n\times n$ matrices
	with entries from the underlying field and hence, their determinants
	can be computed in time $N^{O(1)}$. Hence, we can compute
	$\det(M)(p,q)$ --- for each $p,q$ --- in time $n^{O(d)}\cdot
	N^{O(1)} = N^{O(d)}$. The result follows.
\end{proof}

\section{Determinant computation over general algebras}\label{sec:general}

We now consider the problem of computing the determinant of an
$n\times n$ matrix with entries from a general finite-dimensional
algebra $A$ of dimension $D$ over a field $\field$ that is either
finite, or the rationals.  We consider two algorithmic questions: the
first is the problem of computing the determinant over $A$, where $A$
is a \emph{fixed} algebra (and hence of constant dimension) such as
$M_2(\field)$; the second is the case when $A$ is presented to the
algorithm along with the input (in this case, $A$ could have large
dimension). We present our results for the latter case in the
appendix.

In the first case, we prove a strong dichotomy for finite fields of
characteristic $p > 2$. For any fixed algebra $A$, we show, based on
the structure of the algebra, that either the determinant over $A$ is
polynomial-time computable, or computing the determinant over $A$ is
$\mathrm{Mod}_pP$-hard.

We first recall a few basic facts about the structure of finite
dimensional algebras. An algebra is \emph{simple} if it is isomorphic
to a matrix algebra (possibly of dimension $1$) over a field extension
of $\field$. An algebra is said to be \emph{semisimple} if it can be
written as the direct sum of simple algebras. \footnote{This is not
the standard definition of semisimplicity in the case of infinite
fields. However, we will only use it in the case that $\field$ is
finite. See \lref[Appendix]{section_alg_ref}.}

Recall that a \emph{left ideal} in an algebra $A$ is a subalgebra $I$
of $A$ such that for any $x\in I$ and $a\in A$, we have $ax\in I$; a
\emph{right ideal} is defined similarly. An ideal $I$ is said to be
nilpotent if there exists an $m\geq 1$ such that the product of any
$m$ elements from $I$ is $0$. The \emph{radical of $A$} denoted $R(A)$
is defined to be the ideal generated by all the nilpotent left ideals
of $A$. We list some well-known properties of the radical (see
\cite[Chapter IV]{CurtisR1962}): (a) The radical is a left and right ideal in
$A$, (b) The radical is nilpotent: that is, there exists a
$d\in\naturals$ such that the product of any $d$ elements of $R(A)$ is
$0$. The least such $d$ is called the \emph{nilpotency index of the
radical $R(A)$}, and (c) $A/R(A)$ is semisimple.

An algebra $A$ is a \emph{semidirect sum} of subalgebras
$B_1$ and $B_2$ if $A = B_1\oplus B_2$ as a vector space; we
denote this as $A = B_1\oplus' B_2$. The \emph{Wedderburn-Malcev
theorem} (\lref[Theorem]{thm_wedderburn_malcev}) tells us that any
algebra is a semidirect sum of its radical with a subalgebra. We
refer to such a decomposition as a \emph{Wedderburn-Malcev
decomposition}.

We start with the hardness result.

\begin{theorem}
	\label{thm_algebra_hardness}
	Let $A$ denote any fixed algebra over a finite field $\field$ of
	characteristic $p > 2$. If $A/R(A)$ is non-commutative, computing
	the determinant over $A$ is $\mathrm{Mod}_p P$-hard.
\end{theorem}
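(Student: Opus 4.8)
The plan is to reduce the problem of computing the determinant over $A$ to the problem of computing the determinant over $M_2(\field')$ for some finite field $\field'$ of characteristic $p$, and then invoke Theorem \ref{thm:2x2}. The starting point is the Wedderburn-Malcev decomposition $A = B \oplus' R(A)$, where $B$ is a subalgebra isomorphic to the semisimple quotient $A/R(A)$. By hypothesis $B$ is non-commutative, so when we write $B$ as a direct sum of simple algebras $B \cong \bigoplus_i M_{d_i}(\field_i)$ (with each $\field_i$ a finite extension of $\field$, hence of characteristic $p$), at least one summand $M_{d_i}(\field_i)$ must be non-commutative, i.e.\ $d_i \geq 2$. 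The key structural observation is then that $M_2(\field_i)$ embeds (unitally) as a subalgebra of $M_{d_i}(\field_i)$ — take the top-left $2\times 2$ block together with an identity on the remaining coordinates — and that $\field_i$ itself, being a finite field of characteristic $p$, has $\det_{M_2(\field_i)}$ which is $\mathrm{Mod}_pP$-hard by Theorem \ref{thm:2x2} (using $p > 2$ so that $p$ is odd).

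Next I would handle the projections. Composing the projection $A \to A/R(A) \cong B$ with the projection of $B$ onto its $i$-th simple summand $M_{d_i}(\field_i)$, and then onto the embedded $M_2(\field_i)$-corner, gives an algebra homomorphism $\phi: A \to M_2(\field_i)$. The crucial point is that $\phi$ is \emph{surjective} and, more importantly, admits a one-sided inverse in the sense that every element of $M_2(\field_i)$ is the image of some explicitly describable element of $A$: indeed, lifting via the Wedderburn component $B \subseteq A$, the matrix units $e_{11}, e_{12}, e_{21}, e_{22}$ of the $M_2$-corner correspond to honest elements $\tilde e_{11}, \tilde e_{12}, \tilde e_{21}, \tilde e_{22} \in B \subseteq A$ that multiply among themselves (modulo $R(A)$) exactly as $2\times 2$ matrix units. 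Now, given an instance $M = (m_{k\ell})$ of $\det_{M_2(\field_i)}$, where each $m_{k\ell} \in M_2(\field_i)$, I replace each entry by the corresponding lift $\tilde m_{k\ell} \in A$, obtaining a matrix $\tilde M$ over $A$. Because $\phi$ is an algebra homomorphism and respects the product structure defining the Cayley determinant, $\phi(\det_A(\tilde M)) = \det_{M_2(\field_i)}(M)$. Since $\phi$ is computable and $\field_i$ is a fixed finite extension of $\field$ (hence $\phi$ is a fixed linear map, independent of the input), reading off $\det_{M_2(\field_i)}(M)$ from $\det_A(\tilde M)$ is a polynomial-time post-processing step. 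Combined with the $\mathrm{Mod}_pP$-hardness of $\det_{M_2(\field_i)}$ from Theorem \ref{thm:2x2}, this yields $\mathrm{Mod}_pP$-hardness of $\det_A$.

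The step I expect to be the main obstacle is verifying that the Wedderburn-Malcev lift interacts correctly with the \emph{noncommutative, order-sensitive} product in the Cayley determinant: the determinant $\det_A(\tilde M) = \sum_\sigma \sgn(\sigma) \tilde m_{1\sigma(1)} \cdots \tilde m_{n\sigma(n)}$ is a sum of ordered products, and I need $\phi$ applied term-by-term to recover $\det_{M_2(\field_i)}(M)$ exactly. This works because $\phi$ is a genuine algebra homomorphism — it preserves products in the given order — but one must be careful that the embedding $M_2(\field_i) \hookrightarrow M_{d_i}(\field_i) \hookrightarrow B \hookrightarrow A$ is \emph{unital} at each stage (otherwise identity elements, and hence the whole reduction, break), and that no cross-terms from other simple summands or from the radical $R(A)$ contaminate the image; the latter is guaranteed precisely because $\phi$ factors through the quotient by $R(A)$ and then through a single simple block. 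A secondary technical point is ensuring $\field_i$ really does have characteristic $p$ (immediate, since it is a finite extension of $\field$) so that Theorem \ref{thm:2x2}'s odd-characteristic hypothesis applies; this is where the assumption $p > 2$ is used.
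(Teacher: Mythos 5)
Your reduction is essentially the paper's proof: Wedderburn--Malcev gives a subalgebra $B\cong A/R(A)$ inside $A$, the semisimple decomposition produces a simple summand $M_{d_i}(\field_i)$ with $d_i\ge 2$ over a finite extension $\field_i$ of $\field$ (still of odd characteristic), one embeds $M_2(\field_i)$ into it and invokes Theorem~\ref{thm:2x2}; the paper states this more tersely but the route is identical. Two small corrections to your writeup: the map $m\mapsto\mathrm{diag}(m,I_{d_i-2})$ you describe is not additive and hence not an algebra homomorphism, and unitality is in fact \emph{not} needed anywhere --- the Cayley determinant is just a signed sum of ordered products of the entries, so the standard non-unital corner embedding $m\mapsto\mathrm{diag}(m,0)$, composed with the inclusions of the summand into $B$ and of $B$ into $A$, already satisfies $\det_A(\iota(M))=\iota(\det_{M_2(\field_i)}(M))$, and injectivity of the fixed linear map $\iota$ lets you read off the answer. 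Likewise, ``projection onto the $M_2$-corner'' is not a homomorphism on all of $M_{d_i}(\field_i)$; you only ever need the inverse of $\iota$ on its image, which is what your post-processing step actually uses.
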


\begin{proof}
	Consider the problem of computing the determinant over an algebra $A$
	such that $A/R(A)$, the ``semisimple part'' of $A$, is
	non-commutative. Since $A/R(A)$ is semisimple, we know that $A/R(A)
	\cong \bigoplus_i A_i$, where each $A_i$ is a simple algebra, and
	hence isomorphic to a matrix algebra over a field extension of
	$\field$. If each of the $A_i$s is a matrix algebra of dimension $1$
	(that is, each $A_i$ is simply a field extension of $\field$), then
	$A/R(A)$ is commutative. Hence, w.l.o.g., we assume that $A_1$ has
	dimension greater than $1$. Moreover, by the Wedderburn-Malcev
	theorem (see \lref[Theorem]{thm_wedderburn_malcev} in the appendix),
	we know that $A$ contains a subalgebra $B\cong A/R(A)$.  Thus, the
	algebra $A_1$ is isomorphic to a subalgebra of $A$. Thus, \lref[Theorem]{thm:2x2} immediately implies that computing the determinant
	over $A$ is $\mathrm{Mod}_p P$-hard.
\end{proof}

\subsection{The upper bound}

In this section, we show that if $A/R(A)$ is commutative, then the
determinant over $A$ is efficiently computable. However, we present
our result in some generality, which will be useful later. We assume
that the algebra $A$ is presented to the algorithm along with the
input as follows: we are given a (vector space) basis
$\{a_1,\ldots,a_D\}$ for $A$ along with the pairwise products
$a_ia_j$ for every $i,j\in [D]$.  Let $d$ denote the nilpotency index
of $R(A)$.

The Wedderburn-Malcev theorem (see \lref[Theorem]{thm_wedderburn_malcev}) tells us that the algebra $A = B \oplus'
R(A)$, where $B$ is a semisimple subalgebra of $A$ isomorphic to
$A/R(A)$, and hence commutative.

We will use without explicit mention the following result, which was
explicit in the work of Chien and Sinclair \cite{ChienS2007}, and implicit in
that of Mahajan and Vinay \cite{MahajanV1997} (and also many other works):

\begin{theorem}
	\label{thm_comm_alg}
	There is a deterministic algorithm which, when given any commutative
	algebra $A$ of dimension $D$ and an $n\times n$ matrix over $A$ as
	input, computes the determinant of $A$ in time $\poly(n,D)$.
\end{theorem}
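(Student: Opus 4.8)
The plan is to reduce to a \emph{division-free} determinant algorithm over a commutative ring. The key observation is that, because $A$ is commutative, the Cayley determinant $\det_A(M) = \sum_{\sigma\in S_n}\sgn(\sigma)\prod_i m_{i\sigma(i)}$ does not depend on the order of multiplication and therefore equals the ordinary determinant of $M$ viewed as a matrix over the commutative ring $A$. We cannot use Gaussian elimination, since $A$ need not be an integral domain (e.g.\ $\field[x]/(x^2)$ has zero divisors), so the algorithm must avoid division entirely.

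First I would invoke one of the classical division-free determinant algorithms --- the clow-sequence algorithm of Mahajan and Vinay~\cite{MahajanV1997} (which in fact gives a polynomial-size ABP), or the algorithms of Samuelson~\cite{Samuelson1942}, Berkowitz~\cite{Berkowitz1984}, Chistov~\cite{Chistov1985} --- each of which produces an arithmetic circuit of size $\poly(n)$, using only the ring operations $+$ and $\times$, that computes the determinant of an arbitrary $n\times n$ matrix over an arbitrary commutative ring. Running this circuit with its ring operations interpreted in $A$ therefore outputs $\det_A(M)$ after $\poly(n)$ additions and multiplications in $A$.

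Next I would bound the cost of a single arithmetic operation in $A$. Representing an element of $A$ by its coordinate vector in the given basis $\{a_1,\dots,a_D\}$, addition and scalar multiplication cost $O(D)$ field operations. For multiplication, writing $x = \sum_i x_i a_i$ and $y = \sum_j y_j a_j$ we have $xy = \sum_{i,j} x_i y_j (a_i a_j)$; since the basis products $a_i a_j$ are supplied as part of the input, this is a sum of $D^2$ scaled $D$-vectors, computable in $O(D^3)$ field operations (a bound one could improve, but which already suffices). Hence every ring operation in $A$ costs $\poly(D)$ field operations. Composing the two estimates, the division-free circuit performs $\poly(n)$ ring operations each costing $\poly(D)$ field operations, for a total of $\poly(n,D)$; over $\rational$ one adds the standard remark that the intermediate integers produced by these algorithms, and hence all coordinates encountered, stay of polynomial bit-length.

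The only genuine point of care is the insistence on a division-free procedure, which is precisely what the cited works provide; the rest is bookkeeping about representing and multiplying elements of $A$ via its structure constants. This is essentially the observation of Chien and Sinclair~\cite{ChienS2007}, implicit already in~\cite{MahajanV1997}.
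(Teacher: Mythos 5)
Your proposal is correct and is exactly the argument the paper relies on: the paper states \lref[Theorem]{thm_comm_alg} without proof, citing \cite{ChienS2007} and \cite{MahajanV1997}, i.e.\ precisely the division-free determinant algorithms you invoke, combined with the routine observation that arithmetic in $A$ via its structure constants costs $\poly(D)$ field operations. Your remarks on avoiding division (since $A$ may have zero divisors) and on bit-length over $\rational$ are the right points of care, and nothing further is needed.
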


We start with two simple lemmas.

\begin{lemma}	
	\label{lemma_nil_index}
	There is a deterministic polynomial-time algorithm which, when given
	an algebra $A$ as input, computes the nilpotency index of $A$.
\end{lemma}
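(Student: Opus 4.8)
The plan is to compute the nilpotency index of $R(A)$ — here "$A$" in the lemma statement should be read as its radical, since the quantity of interest throughout the section is the nilpotency index of $R(A)$ — by first computing (a basis for) the radical $R(A)$ itself, and then determining the least $d$ with $R(A)^d = 0$ by iteratively computing spanning sets for the powers $R(A), R(A)^2, R(A)^3, \ldots$ as subspaces of $A$. Since each $R(A)^{i}$ is spanned by products $r_1 r_2 \cdots r_i$ with each $r_j$ ranging over a fixed basis of $R(A)$, and since $R(A)^{i+1} = R(A)^i \cdot R(A)$, a spanning set for $R(A)^{i+1}$ is obtained from a spanning set for $R(A)^i$ by multiplying each spanning vector on the right by each basis element of $R(A)$; this uses only the input data (the structure constants, i.e. the pairwise products $a_i a_j$). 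The sequence of dimensions $\dim R(A)^i$ is non-increasing, so it stabilizes; and once $\dim R(A)^{i+1} = \dim R(A)^i$ we have $R(A)^{i+1} = R(A)^i$, hence $R(A)^j = R(A)^i$ for all $j \geq i$, and because $R(A)$ is nilpotent this common value must be $0$. Thus the smallest $d$ with $R(A)^d = 0$ is found after at most $D$ iterations, each of which is a linear-algebra computation (Gaussian elimination over $\field$ to track dimensions and spanning sets) of cost $\poly(D)$.

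The first ingredient — computing a basis for $R(A)$ from the structure constants in polynomial time — is a classical fact; over a finite field (or $\rational$) one can use, e.g., the Dickson/Friedl–R\'onyai algorithm, which identifies the radical as the kernel of a suitable bilinear (trace) form, or in characteristic $0$ simply as the radical of the trace form. I would either cite this directly (Curtis–Reiner \cite{CurtisR1962}, or the standard algorithmic references for structure of associative algebras) or, if a self-contained argument is preferred, note that $R(A)$ is the set of $x \in A$ such that the left-multiplication operator $L_x$ is nilpotent together with all its "products with other left multiplications" — concretely, $R(A) = \{x : \mathrm{Tr}(L_{xy}) = 0 \text{ for all } y \in A\}$ when $\field$ has characteristic $0$ or large enough, and in small characteristic one intersects with the radicals of the higher trace forms — each such condition being a system of linear equations in the coordinates of $x$ that is solvable in $\poly(D)$ time. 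Either way this step reduces to finitely many rank computations.

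The main obstacle is precisely this first step: extracting $R(A)$ in polynomial time in small positive characteristic, where the single trace form does not suffice and one must be slightly careful (the standard fix is to use the forms $x \mapsto \mathrm{Tr}(L_x^{p^k})$ or to work with the iterated construction, all still polynomial-time). Once $R(A)$ is in hand the remaining power-computation is routine linear algebra as described above. I would therefore structure the writeup as: (i) invoke the known polynomial-time algorithm for computing $R(A)$; (ii) compute spanning sets for $R(A)^1, R(A)^2, \ldots$ successively via the structure constants; (iii) return the first index at which the span becomes $\{0\}$ (equivalently, at which the dimension stops strictly decreasing), bounding the number of iterations by $D$ and the per-iteration cost by $\poly(D)$, which gives the claimed polynomial running time.
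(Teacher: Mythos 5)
Your proposal matches the paper's own proof essentially exactly: the paper likewise computes a basis for $R(A)$ by invoking the Friedl--R\'onyai algorithm (stated as \lref[Theorem]{thm_radical_compute} in the appendix), then successively computes bases for $R(A)^2, R(A)^3, \ldots, R(A)^D$ and outputs the least $d$ with $R(A)^d = \{0\}$, using the bound $d \leq D$. Your additional remarks (reading the lemma as referring to the nilpotency index of $R(A)$, the stabilization criterion, and the small-characteristic subtlety in radical computation) are all correct and consistent with the paper's treatment.
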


\begin{proof}
	Let $d$ denote the nilpotency index of $A$. It is easy to see that
	$d\leq D$, the dimension of the algebra as a vector space over
	$\field$. The algorithm computes a basis for $R(A)$ (this can be
	done in deterministic polynomial time by \lref[Theorem]{thm_radical_compute}), and then successively computes a basis
	for $R(A)^2, R(A)^3, \ldots, R(A)^D$ and outputs the least $d$ such
	that $R(A)^d = \{0\}$.
\end{proof}

\begin{lemma}
	\label{lemma_id_B}
	Let $A$ be a finite-dimensional algebra with Wedderburn-Malcev
	decomposition $A = B\oplus' R(A)$. Then, $1\in B$.
\end{lemma}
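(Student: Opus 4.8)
The plan is to use the direct-sum structure $A = B \oplus' R(A)$ to write the identity of $A$ as $1 = e + r$ with $e \in B$ and $r \in R(A)$, and then show that the radical component $r$ must vanish, which gives $1 = e \in B$.

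First I would observe that $e$ must act as a two-sided identity on $B$. Indeed, for any $b \in B$ we have $b = b\cdot 1 = be + br$; since $R(A)$ is a (right) ideal, $br \in R(A)$, while $b - be \in B$, so $b - be = br$ lies in $B \cap R(A) = \{0\}$ (the sum is direct), forcing $be = b$. The symmetric argument with $1 = e + r$ on the left gives $eb = b$. In particular $e^2 = e$, so $e$ is an idempotent of $A$.

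Next, knowing $e^2 = e$, I would multiply the relation $1 = e + r$ by $e$ on both sides: $e = e\cdot 1 = e^2 + er = e + er$ yields $er = 0$, and likewise $re = 0$. Therefore $r = r\cdot 1 = r(e + r) = re + r^2 = r^2$, and by induction $r = r^k$ for every $k \geq 1$. Since $R(A)$ is nilpotent (property (b) of the radical), $r = r^d = 0$, and hence $1 = e \in B$.

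The only real subtlety — and it is a mild one — is the first step: we may not simply assume that $B$ contains $1_A$ or is unital with identity $1_A$, since that is precisely what the lemma asserts, so we must extract an identity for $B$ from the direct-sum decomposition before we can exploit idempotency. Once that is in place, the rest is a two-line computation relying only on the fact that $R(A)$ is a nilpotent two-sided ideal.
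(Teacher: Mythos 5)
Your proof is correct and takes essentially the same route as the paper's: write $1 = e + r$, use directness of $A = B\oplus' R(A)$ to kill the cross terms (the paper extracts $br=0$ directly from $b = b^2 + br$, whereas you first note $e$ is an identity for $B$ and hence idempotent), then deduce $r = r^2$ and conclude $r = 0$ from nilpotency of $R(A)$. The extra observations ($eb=b$ for all $b\in B$, $re=0$) are fine but not needed beyond what the paper's two-line computation already uses.
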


\begin{proof}
	We can write the identity $1$ of $A$ as $1 = b+r$, where $b\in B$
	and $r\in R(A)$. We would like to show that $r=0$. Note that
	$b = b\cdot 1 = b^2 + br$. Since $b^2 \in B$ and $br\in R$, we must
	have $br = 0$. Similarly, $r = 1r = br + r^2$. But $br = 0$ implies
	that $r = r^2$. This implies that $r = r^k$ for any $k \geq 1$. But
	we know that $r$ is nilpotent. Hence, $r = 0$.
\end{proof}

These lemmata and a generalization of \lref[Theorem]{thm_det_upper} yield the following:
\begin{theorem}
	\label{thm_det_algo}
	There exists a deterministic algorithm, which when given as input an
	algebra $A$ of dimension $D$ s.t.  $A/R(A)$ is commutative and an
	$n\times n$ matrix $M$ with entries from $A$, computes the
	determinant of $M$ in time $N^{O(d)}$, where $d$ is the nilpotency
	index of $R(A)$ and $N$ is the size of the input.
\end{theorem}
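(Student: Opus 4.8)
The plan is to reduce the problem over $A$ to determinant computations over the commutative algebra $B\cong A/R(A)$, using the Wedderburn-Malcev decomposition $A = B\oplus' R(A)$ exactly as a refinement of the proof of \lref[Theorem]{thm_det_upper}. In that theorem the key structural fact was that $U_d(\F)$ has a two-sided ideal (the strictly-upper-triangular matrices) which is nilpotent of index $d$, with commutative (indeed diagonal) quotient; the generalization replaces ``strictly upper triangular'' by $R(A)$ and ``diagonal'' by $B$. So first I would fix a basis of $A$ adapted to the decomposition, compute $R(A)$ and its nilpotency index $d$ via \lref[Lemma]{lemma_nil_index}, and compute (using the Wedderburn-Malcev algorithm, which I will cite from the appendix) an explicit basis for a semisimple complement $B$, so that every element of $A$ is written uniquely as $b + r$ with $b \in B$, $r \in R(A)$, and by \lref[Lemma]{lemma_id_B} the identity lies in $B$.

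Next I would expand the Cayley determinant $\det(M) = \sum_{\sigma}\sgn(\sigma)\,m_{1,\sigma(1)}\cdots m_{n,\sigma(n)}$ and, writing each $m_{i,j} = b_{i,j} + r_{i,j}$, multiply out each length-$n$ product. Because $R(A)$ is a two-sided ideal with $R(A)^d = 0$, any term in the expansion that uses $d$ or more of the ``$r$'' factors vanishes; hence each product contributes only terms indexed by a choice of a subset $T\subseteq[n]$ of size at most $d-1$ (the positions where the $r$-factor is taken) together with the corresponding interleaving data. Collecting these, $\det(M)$ becomes a sum of $n^{O(d)}$ pieces, each of which is, after pushing the finitely many fixed elements of $R(A)$ through a fixed basis, a $B$-linear combination of entries of a determinant-like sum over $S_n$ of products of elements of the commutative algebra $B$. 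The crucial point — and this is the generalization of the passage from \eqref{eq_det_u} to \eqref{eq_det_u_2} in \lref[Theorem]{thm_det_upper} — is that for a \emph{fixed} pattern $T$ and fixed choice of basis elements of $R(A)$ at those positions, the remaining sum over permutations factors as an honest determinant over the commutative algebra $B$ (the sign $\sgn(\sigma)$ is carried along unchanged, and the $r$-factors, being scalars in a fixed finite basis, can be pulled outside the permutation sum by bilinearity). Each such determinant over $B$ is computable in time $\poly(n,D)$ by \lref[Theorem]{thm_comm_alg}, and there are $n^{O(d)}\cdot \poly(D)$ of them, giving the claimed $N^{O(d)}$ bound.

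The step I expect to require the most care is the bookkeeping that turns ``a product of $n$ elements of the form $b+r$, with at most $d-1$ of the summands being $r$'s'' into a clean sum of determinants over $B$: one must verify that after choosing the positions $T$ and the specific $R(A)$-basis element at each position of $T$, the residual object really is $\det$ of an $n\times n$ matrix over $B$ (and not something more complicated), and that the coefficients in front — elements of $A$ obtained as ordered products of the chosen $R(A)$-elements together with the identity — are correctly handled by the multiplication table we were given. This is entirely analogous to the argument in \lref[Theorem]{thm_det_upper}, where the interleaving indices $p\le k_1\le\cdots\le k_{n-1}\le q$ play the role of our pattern $T$; the only genuinely new ingredient is that the quotient is an abstract commutative algebra $B$ rather than the diagonal algebra, which is exactly what \lref[Theorem]{thm_comm_alg} is there to absorb. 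Since $A/R(A)$ is assumed commutative, $B$ is commutative, so \lref[Theorem]{thm_comm_alg} applies and the proof goes through; this completes the argument and, together with \lref[Theorem]{thm_algebra_hardness}, yields the stated dichotomy.
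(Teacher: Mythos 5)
Your overall strategy---Wedderburn--Malcev decomposition, expanding each entry as $b_{i,j}+r_{i,j}$, killing all terms with $\ge d$ radical factors, and fixing the positions $S$ and the images $f=\sigma|_S$---matches the paper's proof up to and including the identity $\det(M)=\sum_{S,f}\det(M(S,f))$. But the step you flag as ``requiring the most care'' is exactly where your argument breaks, and the answer to your own question is: no, the residual object is \emph{not} a determinant over $B$, and the $r$-factors \emph{cannot} be ``pulled outside the permutation sum by bilinearity.'' The elements of $R(A)$ are not scalars and not central in $A$; they merely generate an ideal. After fixing $S=\{i_1<\cdots<i_t\}$, $f$, and even a specific basis element of $R(A)$ at each position of $S$, the inner sum is
\[
\sum_{\sigma:\,\sigma|_S=f}\sgn(\sigma)\Bigl(\prod_{i<i_1}b_{i,\sigma(i)}\Bigr)\,r_{i_1,j_1}\,\Bigl(\prod_{i_1<i<i_2}b_{i,\sigma(i)}\Bigr)\,r_{i_2,j_2}\cdots,
\]
in which the $t+1$ segments of $B$-products are separated by fixed non-central elements. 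Since $b\cdot r\neq r\cdot b$ in general (already for $U_d(\F)$: diagonal matrices do not commute with strictly upper triangular ones), you cannot merge the segments into a single product over $B$, so this is not $\det$ of any $n\times n$ matrix over $B$ and \lref[Theorem]{thm_comm_alg} does not directly apply. The analogy with \lref[Theorem]{thm_det_upper} is also misleading here: there, fixing the index sequence $k_1\le\cdots\le k_{n-1}$ reduces every factor to a \emph{field} element, so commutativity is restored for free; in the abstract setting, fixing $S$ and $f$ leaves whole segments of $B$-elements that must be multiplied in the correct interleaved order with the $r$'s.

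The missing idea, which the paper supplies, is to compute the joint (signed, summed over $\sigma$) coefficient vector of all $t+1$ segments simultaneously by passing to the commutative algebra $B^{\otimes(t+1)}$ of dimension at most $D^{d}$: one forms $M'(S,f)$ by sending the $B$-entries of the $\ell$-th segment into the $\ell$-th tensor factor via $\phi_\ell$ and replacing each $R(A)$-entry by $1^{\otimes(t+1)}$, computes $\det(M'(S,f))$ over this commutative algebra, reads off the coefficients $c_{\overline{k}}$ of $e_{k_0}\otimes\cdots\otimes e_{k_t}$, and only then reassembles $\det(M(S,f))=\sum_{\overline{k}}c_{\overline{k}}\,e_{k_0}r_{i_1,j_1}e_{k_1}\cdots r_{i_t,j_t}e_{k_t}$ using the multiplication table of $A$. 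Without this (or an equivalent device for tracking the segments separately), your reduction to \lref[Theorem]{thm_comm_alg} does not go through.
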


In particular, when $A$ is a fixed algebra, then $d\leq D = O(1)$, and
hence, \lref[Theorem]{thm_det_algo} gives us a polynomial-time
algorithm. This yields straightaway the sharp dichotomy theorem in
the case of a fixed algebra over finite fields of odd characteristic.

\begin{corollary}
	\label{corollary_dichotomy}
	Let $\field$ be any finite field of odd characteristic and $A$ be
	any fixed algebra over $\field$. Then, if $A/R(A)$ is
	non-commutative, computing the determinant over $A$ is
	$\mathrm{Mod}_pP$-hard. If $A/R(A)$ is commutative, then the
	determinant can be computed in polynomial time.
\end{corollary}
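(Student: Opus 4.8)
The plan is to derive the corollary immediately from the two main results of this section, \lref[Theorem]{thm_algebra_hardness} and \lref[Theorem]{thm_det_algo}, observing only that the hypothesis that $A$ is \emph{fixed} is precisely what lets us absorb the nilpotency index into the constant in the exponent. The two cases ``$A/R(A)$ commutative'' and ``$A/R(A)$ non-commutative'' are literally complementary, so it suffices to handle each separately.

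First I would dispose of the hardness half. Suppose $A/R(A)$ is non-commutative. Since $\field$ is a finite field of odd characteristic, its characteristic is a prime $p > 2$, and \lref[Theorem]{thm_algebra_hardness} applies verbatim to give that computing the determinant over $A$ is $\mathrm{Mod}_pP$-hard. Nothing further is needed here; all the work is internal to \lref[Theorem]{thm_algebra_hardness}, which in turn reduces — via the Wedderburn--Malcev decomposition and the simple structure of matrix algebras — to the $2\times 2$ lower bound of \lref[Theorem]{thm:2x2}.

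For the easiness half, suppose $A/R(A)$ is commutative and invoke \lref[Theorem]{thm_det_algo}, which computes the determinant of an $n\times n$ matrix over $A$ in time $N^{O(d)}$, where $d$ is the nilpotency index of $R(A)$ and $N$ is the input size. It remains only to check that this is polynomial. Because $A$ is fixed, its dimension $D$ as an $\field$-vector space is a constant, independent of $n$ and hence of $N$; moreover $d \le D$, since the descending chain $R(A)\supseteq R(A)^2\supseteq\cdots$ strictly drops in dimension until it reaches $\{0\}$ (the same bound is noted in the proof of \lref[Lemma]{lemma_nil_index}). Therefore $d = O(1)$ and $N^{O(d)} = N^{O(1)}$, i.e.\ the determinant over $A$ is polynomial-time computable.

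I do not expect a genuine obstacle in this argument: the substantive content lives in \lref[Theorem]{thm_algebra_hardness} and \lref[Theorem]{thm_det_algo}, and the corollary is essentially a packaging of the two. If anything needs care, it is only the mild point that ``polynomial time'' in the fixed-algebra regime is taken with $D$ (and hence $d$) treated as a constant — exactly the reading of ``fixed'' adopted in the introduction — so that, together with the complementarity of the two cases, the corollary constitutes a complete characterization for constant-dimensional algebras over finite fields of odd characteristic.
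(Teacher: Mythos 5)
Your proposal is correct and matches the paper's own derivation: the paper obtains the corollary exactly by combining \lref[Theorem]{thm_algebra_hardness} for the non-commutative case with \lref[Theorem]{thm_det_algo} for the commutative case, noting that for a fixed algebra $d\leq D = O(1)$ so the $N^{O(d)}$ bound is polynomial. No gaps.
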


\begin{proof}[Proof of {\lref[Theorem]{thm_det_algo}}]
	The algorithm first computes the Wedderburn-Malcev decomposition $A
	= B \oplus' R(A)$ of the algebra $A$: a result of de Graaf et al.
	(\lref[Theorem]{thm_wedderburn_malcev_compute}) shows that such a
	decomposition may be computed efficiently. By \lref[Lemma]{lemma_nil_index}, we can compute the nilpotency index $d$ of
	the algebra in deterministic polynomial time. We assume that $d\leq
	n$; otherwise, the bruteforce algorithm for the determinant
	has running time $N^{O(d)}$.
	
	For any $i$ and $j$, the $(i,j)$th entry of the input matrix $M$ can
	be written uniquely as $m_{i,j} = b_{i,j}+r_{i,j}$ where $b_{i,j}\in B$
	and $r_{i,j}\in R$; the elements $b_{i,j}$ and $r_{i,j}$ are also
	efficiently computable. Now, note that the determinant of the input
	matrix $M$ can be written as
	$$
   \det(M) = \sum_{\sigma\in S_n} \sgn(\sigma)
		(b_{1,\sigma(1)} + r_{1,\sigma(1)})\cdot (b_{2,\sigma(2)} +
		r_{2,\sigma(2)})\cdots (b_{n,\sigma(n)} + r_{n,\sigma(n)})
		= \sum_{\sigma\in S_n}\sgn(\sigma)\sum_{S\subseteq [n]}
		t(\sigma,S)
	$$
	where $t(\sigma,S)$ is the product, in increasing order of $i$, of
	$r_{i,j}$ for $i\in S$ and $b_{i,j}$ for $i\not\in S$. Note that
	$t(\sigma,S)\in R(A)^{|S|}$ (we use here the fact that $R(A)$ is
	an ideal in $A$) and hence, $t(\sigma,S) = 0$ if $|S|\geq d$. Thus,
	we may only consider $S$ of size strictly less than $d$.
	
	We divide the terms $t(\sigma,S)$ based on the $r_{i,j}$
	that actually appear in $t(\sigma,S)$. Specifically, for each
	$1$-$1$ function $f:S\rightarrow [n]$, let $t(\sigma,S,f)$ denote
	$t(\sigma,S)$ if $\sigma|_S = f$ and $0$ otherwise. We can write the
	determinant $\det(M)$ as
	\begin{align*}
		\det(M) = \sum_{\substack{S\subseteq [n]:\\ |S|<
		d}}\sum_{\substack{f:S\rightarrow
		[n]:\\ f\ 1-1}}\sum_{\sigma\in S_n}\sgn(\sigma) t(\sigma,S,f)
		= \sum_{\substack{S\subseteq [n]:\\ |S|<
		d}}\sum_{\substack{f:S\rightarrow
		[n]:\\ f\ 1-1}}\det(M(S,f))
	\end{align*}
	where the entries $m(S,f)_{i,j}$ of $M(S,f)$ are defined as follows:
	for $i\in S$, $m(S,f)_{i,j} = 0$ if $f(i)\neq j$ and $r_{i,j}$
	otherwise; for $i\not\in S$, $m(S,f)_{i,j} = b_{i,j}$. We show that
	for each $S$ and $f$ as above, $\det(M(S,f))$ can be computed in
	time $N^{O(d)}$, which will prove the theorem,
	since there are only $n^{O(d)}$ of them to compute. For the
	remainder of the proof, we fix some subset $S\subseteq [n]$ of size
	$t < d$ and $f:S\rightarrow [n]$ that is $1$-$1$.

	Note that the matrix $M(S,f)$ is ``almost'' a matrix over the
	commutative subalgebra $B$ of $A$: it contains exactly $d$ entries
	outside $B$, one in each row indexed by an element of $S$. We
	reduce the computation of $\det(M(S,f))$ to the computation of
	the determinant of a similar matrix over a commutative algebra
	closely related to $B$.
	Indeed, let $B^{\otimes (t+1)}$ denote $B\otimes B\otimes\cdots \otimes B$
	($t+1$ times). This is a commutative algebra of dimension
	at most $D^d$. Furthermore, we see that $1^{\otimes (t+1)}$ is the
	identity element of this algebra.  For $i\in [t]\cup \{0\}$, we denote by
	$B_i$ the following subalgebra of $B^{\otimes (t+1)}$: $1^{\otimes
	i}\otimes B\otimes 1^{\otimes (t-i)}$. It can easily be seen
	that each $B_i$ is isomorphic to $B$ by the isomorphism
	$\phi_i:B\rightarrow B_i$ where $\phi_i(b) = 1^{\otimes i}\otimes b\otimes 1^{\otimes (t-i)}$.

	\newcommand{\Pre}{\mathrm{Pre}}
	Given any $i\in [n]$, we denote by $\mathrm{Pre}(i)$ the set
	$\setcond{i'\in S}{i' < i}$. We now construct a new matrix $M'(S,f)$
	with entries from $B^{\otimes (t+1)}$ as follows:
	\[
	m'(S,f)_{i,j} = \left\{
	\begin{array}{lr}
		0 & \text{if $i\in S$ and $f(i)\neq j$,}\\
		1^{\otimes (t+1)} & \text{if $i\in S$ and $f(i)=j$,}\\
		\phi_\ell(m(S,f)_{i,j}) & \text{if $i\not\in S$ and $\ell =
		|\Pre(i)|$}.
	\end{array}\right.
	\]
	In words, to construct $M'(S,f)$, we have replaced each entry in
	$M(S,f)$ that is in $R(A)$ by the identity $1^{\otimes (t+1)}$ and
	each entry $b_{i,j}\in B$ by the corresponding element in $B_\ell$ where
	$\ell = |\Pre(i)|$.

	Since $M'(S,f)$ is a matrix with entries from the commutative
	algebra $B^{\otimes (t+1)}$, its determinant can be computed in time
	$N^{O(d)}$. Say $S = \{i_1,\ldots,i_t\}$ and $f(i_k) = j_k$ for
	$k\in [t]$. Let $\{e_1,\ldots,e_m\}$ be a basis for $B$. Then, we have
	\begin{align*}
		\det(M(S,f)) &= \sum_{\substack{\sigma\in S_n:\\ \sigma|_S = f}}
		\sgn(\sigma)\left(\prod_{i<i_1}b_{i,\sigma(i)}\right)\cdot
		r_{i_1,j_1}\cdot\left(\prod_{i_1<i<i_2}b_{i,\sigma(i)}\right)\cdot
		r_{i_2,j_2}\cdots r_{i_t,j_t}\cdot \left(\prod_{i>i_t}
		b_{i,\sigma(i)}\right)\\
	\end{align*}

	Each product of the form $\prod_{i\in T}b_{i,\sigma(i)}$ that
	appears in the summation above is an element of the commutative
	algebra $B$ and hence can be expanded in the basis of $B$ as follows:
	\begin{align}
		\det(M(S,f)) &= \sum_{\substack{\sigma\in S_n:\\ \sigma|_S = f}}
		\sgn(\sigma)\left(\sum_{k=1}^m \alpha_{k,\sigma}^{(0)}e_k\right)\cdot
		r_{i_1,j_1}\cdot\left(\sum_{k=1}^m \alpha_{k,\sigma}^{(1)}e_k\right)\cdot
		r_{i_2,j_2}\cdots r_{i_t,j_t}\cdot \left(\sum_{k=1}^m
		\alpha_{k,\sigma}^{(t)}e_k\right)\notag\\
		&= \sum_{k_0,\ldots,k_t\in
		[m]}\left(\sum_{\sigma\in S_n: \sigma|_S = f}
		\sgn(\sigma)\alpha_{k_0,\sigma}^{(0)}\cdots\alpha_{k_t,\sigma}^{(t)}\right)e_{k_0}r_{i_1,j_1}e_{k_1}\cdots
		r_{i_t,j_t}e_{k_t}\notag\\
		& = \sum_{\overline{k}\in [m]^{t+1}}c_{\overline{k}}\cdot e_{k_0}r_{i_1,j_1}e_{k_1}\cdots
		r_{i_t,j_t}e_{k_t}\label{eq_msf}
	\end{align}

	\noindent where $\overline{k}$ denotes the tuple $(k_0,\ldots,k_t)$ and
	$c_{\overline{k}}$ denotes $\sum_{\sigma: \sigma|_S =
	f}\alpha_{k_0,\sigma}^{(0)}\cdots\alpha_{k_t,\sigma}^{(t)}$.  Let us
	expand $\det(M'(S,f))$ similarly. We use $e_k^{(\ell)}$ to denote
	$\phi_\ell(e_k)$. We have
	\begin{align*}
		\det(M'(S,f)) &= \sum_{\substack{\sigma\in S_n:\\ \sigma|_S = f}}
		\sgn(\sigma)\left(\prod_{i<i_1}\phi_0(b_{i,\sigma(i)})\right)\cdot
		1^{\otimes (t+1)}\cdot\left(\prod_{i_1<i<i_2}\phi_1(b_{i,\sigma(i)})\right)\cdot
		1^{\otimes (t+1)}\cdots 1^{\otimes (t+1)} \cdot \left(\prod_{i>i_t}
		\phi_t(b_{i,\sigma(i)})\right)\\
		&= \sum_{\substack{\sigma\in S_n:\\ \sigma|_S = f}}
		\sgn(\sigma)\left(\sum_{k=1}^m \alpha_{k,\sigma}^{(0)}e_k^{(0)}\right)\cdot
		1^{\otimes (t+1)}\cdot\left(\sum_{k=1}^m
		\alpha_{k,\sigma}^{(1)}e_k^{(1)}\right)\cdot
		1^{\otimes (t+1)}\cdots 1^{\otimes (t+1)}\cdot \left(\sum_{k=1}^m
		\alpha_{k,\sigma}^{(t)}e_k^{(t)}\right)\\
		&= \sum_{\overline{k}\in [m]^{t+1}}c_{\overline{k}}\cdot
		e_{k_0}^{(0)}e_{k_1}^{(1)}\cdots
		e_{k_t}^{(t)}
		= \sum_{\overline{k}\in [m]^{t+1}}c_{\overline{k}}\cdot
		e_{k_0}\otimes e_{k_1}\otimes \cdots \otimes
		e_{k_t}
	\end{align*}

	Thus, we can simply read off the coefficients
	$c_{\overline{k}}$ from $\det(M'(S,f))$, and using Equation
	\eqref{eq_msf}, we can compute $\det(M(S,f))$. Since $\det(M'(S,f))$
	can be computed in time $N^{O(d)}$, we obtain a
	$N^{O(d)}$-time algorithm to compute $\det(M(S,f))$ and hence for
	$\det(M)$ as well.
\end{proof}


\section{Discussion}\label{sec:disc}

Our results show that the basic Godsil-Gutman approach to
approximating the permanent, as generalized by Chien et al. \cite{ChienRS2003}
runs into many obstacles, since the estimators are not efficiently
computable. In the case of the quaternions, the result of Chien et al.
shows that a suitable modification of the basic estimator still gives
a relatively good approximation to the permanent. Is there such a
modification for matrix algebras?

Our dichotomy theorem in \lref[Section]{sec:general} used crucially
the fact that we worked over a finite field. Over the rationals, for
example, even the structure of semisimple algebras is fairly
complicated, and we don't have an exact characterization of when the
determinant over such an algebra is efficiently computable. Extending
our dichotomy theorem to these algebras is an interesting problem.

\lref[Theorem]{thm_det_algo} shows that even when given the algebra $A$ as
input, the determinant remains efficiently computable as long as
$A/R(A)$ is commutative and $A$ has bounded nilpotence index. How
close is this to being a characterization of algebras over which the
determinant is polynomial-time computable (under reasonable complexity
assumptions) when the algebra is part of the input? More generally,
can one come up with suitable conditions on the radical $R(A)$ under
which computing the determinant over $A$ is hard even when $A/R(A)$ is
commutative?
{\small
\bibliographystyle{prahladhurl}
\bibliography{smalldet}
}

\appendix

\section{Computing the structure of algebras}
\label{section_alg_ref}

An algebra is \emph{simple} if it is isomorphic to a matrix algebra
(possibly of dimension $1$) over a division ring containing $\field$.
Note that if $\field$ is finite, Wedderburn's Little Theorem \cite{Lam1991}
implies that the division ring is a field extension of $\field$ and
hence, a simple algebra is simply a matrix algebra over a field
extension of $\field$. An algebra is said to be \emph{semisimple} if
it can be written as the direct sum of simple algebras.

Friedl and Ronyai \cite{FriedlR1985} first considered the question of
efficiently computing the structural elements of an algebra given as
input in the form of a multiplication table. That is, the algebra $A$
is presented to the algorithm in the form of a basis
$\{a_1,\ldots,a_D\}$ along with a table that lists the pairwise
products $a_ia_j$ for $i,j\in [D]$. They proved the following result:

\begin{theorem}[\cite{FriedlR1985}, Theorem 5.7]
	\label{thm_radical_compute}
	There is a deterministic algorithm which, when given a finite
	dimensional algebra $A$ over $\field$, computes a basis for the
	radical $R(A)$. The algorithm runs in time $\poly(N,\log |\field|)$,
	where $N$ denotes the size of the input.
\end{theorem}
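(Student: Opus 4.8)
The plan is to reduce the computation of $R(A)$ to a sequence of linear-algebra problems over $\field$, using the left regular representation as the workhorse. Write $L_x\in\mathrm{End}_\field(A)$ for the operator $z\mapsto xz$; since $L_xL_y=L_{xy}$, the given structure constants let me write down, in time $\poly(N)$, the matrix of every $L_{a_ia_j}$ and hence of the symmetric bilinear form $T(x,y):=\mathrm{Tr}(L_{xy})$ in the given basis. By cyclicity of the trace, $\mathrm{rad}(T):=\{x:T(x,y)=0\ \forall y\}$ is a two-sided ideal of $A$, and a basis for it comes out of a single Gaussian elimination.

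I would first dispose of the case $\mathrm{char}\,\field=0$ or $\mathrm{char}\,\field>D$, in which I claim $R(A)=\mathrm{rad}(T)$. If $x\in R(A)$ then $xy\in R(A)$ for every $y$, so $L_{xy}$ is nilpotent and $T(x,y)=0$, giving $R(A)\subseteq\mathrm{rad}(T)$. Conversely $T$ descends to a form $\bar T$ on the semisimple quotient $A/R(A)$, which — $\field$ being perfect — is a sum of matrix algebras $\bigoplus_i M_{n_i}(E_i)$ over finite extensions $E_i/\field$; on the $i$-th block $\bar T$ is $\mu_i$ times the composite of the matrix trace form with the field trace $\mathrm{Tr}_{E_i/\field}$, where $1\le\mu_i\le D$ is the multiplicity with which the $i$-th simple module occurs as a composition factor of $A$ (as a left module over itself). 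Under the characteristic hypothesis $\mu_i\ne 0$ in $\field$, and both the matrix trace form and the (separable) field trace are non-degenerate, so $\mathrm{rad}(\bar T)=0$ and $\mathrm{rad}(T)=R(A)$.

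The hard part is small characteristic $p$, where $\bar T$ can be degenerate on $A/R(A)$ — for instance $A=M_p(\field)$ has $T\equiv 0$ although $R(A)=0$ — so the trace form alone cannot finish the job; I expect this to be the main obstacle. What remains true is that $\mathrm{rad}(T)$ is a two-sided ideal containing $R(A)$ and that $\mathrm{rad}(T)/R(A)$ is semisimple, consisting exactly of the blocks $M_{n_i}(E_i)$ with $p\mid\mu_i$. I would strip these off using the Frobenius/$p$-power structure: in a semisimple algebra over a perfect field the $\field$-span of the $p^k$-th powers of its elements is the whole algebra for every $k$, whereas a nilpotent ideal with $R(A)^d=0$ has all its $p^k$-th powers vanishing once $p^k\ge d$. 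Interleaving this with renewed applications of the trace-form construction inside the descending chain of ideals $\mathrm{rad}(T)\supseteq I_2\supseteq I_3\supseteq\cdots$ isolates $R(A)$ after $O(\log_p D)$ rounds. Turning each round into an honest polynomial-time linear-algebra step — above all, evaluating the non-additive maps $x\mapsto x^{p^k}$ and the linear spans of their images without an exponential blow-up — is the delicate point, and is precisely what Friedl and R\'onyai work out, refining a classical criterion of Dickson.

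For the running time: assembling the matrices of $L$, forming $T$ and its successors, solving linear systems over $\field$, and computing $p^k$-th powers by repeated squaring ($O(\log p)$ multiplications each) all cost $\poly(N)$ field operations, and each field operation is $\poly(\log|\field|)$ bit operations; with $O(\log D)=O(\log N)$ rounds this totals $\poly(N,\log|\field|)$, as required.
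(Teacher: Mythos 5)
This statement is imported verbatim from Friedl and R\'onyai (their Theorem~5.7); the paper supplies no proof of its own, so your sketch can only be judged on its own merits. The first half of your argument --- Dickson's criterion that $R(A)=\mathrm{rad}(T)$ for $T(x,y)=\mathrm{Tr}(L_{xy})$ when $\mathrm{char}\,\field=0$ or $\mathrm{char}\,\field>D$, proved via the composition-factor multiplicities $\mu_i$ and nondegeneracy of the matrix and separable field traces --- is correct and complete, and your identification of the obstruction in small characteristic (e.g.\ $T\equiv 0$ on $M_p(\field_p)$) is exactly right.

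The gap is that in small characteristic you describe an iteration whose individual steps are not shown to be linear-algebra problems, and that is precisely where the content of the cited theorem lives. Two concrete issues. First, ``the $\field$-span of the $p^k$-th powers of the elements of an ideal'' is not computable the way your runtime paragraph assumes: $x\mapsto x^{p^k}$ is not additive in a noncommutative algebra, so this span is not generated by the $p^k$-th powers of a basis, and a priori you would have to range over exponentially many elements to span it. Second, the device that actually makes the descending chain computable is absent: Friedl and R\'onyai replace powers of elements by the scalars $\mathrm{Tr}\bigl(L_{(xy)^{p^k}}\bigr)$ and prove that the $p^k$-th root of this quantity is \emph{bilinear} over the prime field (resting on the identity $\mathrm{Tr}((u+v)^{p})=\mathrm{Tr}(u^{p})+\mathrm{Tr}(v^{p})$ in characteristic $p$), so that each ideal $I_{k+1}=\{x\in I_k:\mathrm{Tr}(L_{(xy)^{p^k}})=0\ \forall y\in I_k\}$ is cut out by genuinely linear conditions and the chain provably terminates at $R(A)$. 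Your sketch names this as ``the delicate point \ldots precisely what Friedl and R\'onyai work out,'' which is honest but circular, since that working-out \emph{is} the theorem to be proved. As it stands, the small-characteristic case --- the only case in which the statement is not already classical --- is asserted rather than established.
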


Thus, using the above algorithm, we can obtain in deterministic
polynomial time a description of $A/R(A)$, the ``semisimple'' part of
$A$. Friedl and Ronyai \cite{FriedlR1985} and Ronyai \cite{Ronyai1987} showed
respectively that semisimple algebras can further be decomposed into
simple algebras, and when $\field$ is finite, one can find explicit
isomorphisms from simple algebras to matrix algebras. We state these
two results below.

\begin{theorem}[{\cite[Theorem 7.8]{FriedlR1985}}, {\cite[Theorem 6.2]{Ronyai1987}}]
	\label{thm_semisimple_compute}
	There is a deterministic algorithm which, when given a finite
	dimensional semisimple algebra $A$ over a finite field $\field$,
	computes a decomposition of $A = \bigoplus_i A_i$ into simple
	matrix algebras, and explicit isomorphisms from each $A_i$ to a
	matrix algebra over a field extension of $\field$. The algorithms
	run in time $\poly(N)$, where $N$ is the size of the input.
\end{theorem}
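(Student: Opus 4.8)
The plan is to split the computation into two independent tasks: (i) decompose the semisimple algebra $A$ into its simple two-sided ideals $A_i$, and (ii) for each such $A_i$, produce an explicit isomorphism onto a matrix algebra $M_{n_i}(L_i)$ over a field extension $L_i$ of $\field$. Throughout, the only non-elementary subroutine will be efficient univariate polynomial factorization over $\field$; everything else is linear algebra over $\field$ on the structure constants, which is $\poly(N)$.

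For task (i), I would first compute a $\field$-basis of the center $Z(A) = \{z\in A : za=az \text{ for all } a\in A\}$ by solving a homogeneous linear system. Since $A$ is semisimple, $Z(A)$ is a commutative semisimple $\field$-algebra, and its primitive idempotents are precisely the primitive central idempotents of $A$, one per simple component. To extract them, pick $z\in Z(A)$, compute its minimal polynomial $g(t)\in\field[t]$ (again linear algebra), factor $g=\prod_j g_j$ into irreducibles, and use a CRT computation inside $\field[z]\cong\field[t]/(g(t))$ to obtain a complete set of orthogonal idempotents; if they fail to generate $Z(A)$, recurse inside each summand, the dimension strictly dropping at each step. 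Multiplying $A$ by each resulting primitive central idempotent $e_i$ gives the simple ideal $A_i=e_iA$, and $A=\bigoplus_i A_i$. All of this runs in $\poly(N)$ time.

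For task (ii), fix a simple component $A'$. I would first compute $L:=Z(A')$ as above; because $\field$ is finite, Wedderburn's little theorem forces $L$ to be a field (a finite extension of $\field$), so $A'\cong M_n(L)$ with $n=\sqrt{\dim_L A'}$. What remains is to manufacture matrix units, equivalently a primitive idempotent $e$ with $eA'e=L$ together with an $L$-basis of the minimal left ideal $V=A'e$ (the unique simple module). The engine here is: given any $a\in A'$ whose minimal polynomial over $L$ is reducible or non-squarefree, the primary decomposition of $\field[a]$ yields a nontrivial idempotent and hence splits $1$ into strictly smaller orthogonal idempotents; iterating refines any idempotent down to a primitive one. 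Once $e$ is found, the action of a basis of $A'$ on $V$, written in an $L$-basis of $V$, gives the $n\times n$ matrices realizing the isomorphism $A'\cong M_n(L)$, and presenting $L$ concretely as $\field[t]/(g(t))$ for an irreducible $g$ finishes the description.

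The main obstacle is the idempotent-finding step of task (ii): an element $a$ whose minimal polynomial over $L$ is irreducible of full degree $n$ generates a maximal subfield and yields no idempotent, so the naive refinement can stall. Resolving this — showing one can always, in $\poly(N)$ time, locate a nontrivial zero divisor (equivalently a non-primitive idempotent) in a central simple algebra over a finite field — is exactly the content of Rónyai's analysis of zero divisors, and it is where efficient polynomial factorization over $\field$ is genuinely used; the semisimple-decomposition part (i) and all the bookkeeping are, by comparison, routine linear algebra.
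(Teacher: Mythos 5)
This statement is not proved in the paper at all: it is imported verbatim by citation from Friedl--R\'onyai (Theorem~7.8) and R\'onyai (Theorem~6.2), so there is no in-paper argument to compare yours against. Judged against the actual proofs in those references, your sketch reconstructs their architecture faithfully: splitting $A$ along the primitive idempotents of its center $Z(A)$ (computed by linear algebra, with idempotents extracted by factoring minimal polynomials and applying CRT in $\field[t]/(g)$, recursing until each summand's center is a field) is exactly the Friedl--R\'onyai decomposition step, and reducing the explicit isomorphism $A'\cong M_n(L)$ to the construction of a primitive idempotent/zero divisor is exactly R\'onyai's contribution. You are also right that this last step is where the real work lies, and your proposal does not prove it --- you explicitly defer it to ``R\'onyai's analysis of zero divisors'' --- so as a self-contained proof there is a genuine gap at precisely that point: the naive refinement stalls on elements generating maximal subfields, and getting past this requires R\'onyai's nontrivial argument, not just bookkeeping. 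Two smaller points of fine print: the center of a simple algebra is always a field, so Wedderburn's little theorem is needed not for that but to rule out a nontrivial division-algebra part (so that $\dim_L A'$ is a perfect square and $A'$ is a full matrix algebra over $L$); and the ``deterministic, $\poly(N)$'' claim in the statement silently relies on deterministic univariate factorization over $\field$, whose cost depends on the characteristic (via Berlekamp) --- a caveat present in the cited papers that both the paper's statement and your sketch gloss over.
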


We say that an algebra $A$ is a \emph{semidirect sum} of subalgebras
$B_1$ and $B_2$ if $A = B_1\oplus B_2$ as a vector space. We will
denote this as $A = B_1\oplus' B_2$. The \emph{Wedderburn-Malcev
theorem} tells us that any algebra over a finite field or the
rationals is a semidirect sum of its radical with a subalgebra.

\begin{theorem}[{\cite[Chapter X]{CurtisR1962}}]
	\label{thm_wedderburn_malcev}
  Given any finite dimensional algebra $A$ over a finite field
	or the rationals, there exists a subalgebra $B$ of $A$ such that $A =
	B\oplus' R(A)$.
\end{theorem}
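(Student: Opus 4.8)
The plan is to prove the Wedderburn–Malcev splitting in two stages: first reduce, by induction on the nilpotency index of $R(A)$, to the case where the radical has square zero; then handle that case by a Hochschild-cohomology argument whose only non-formal ingredient is the separability of the semisimple quotient $A/R(A)$ — which is exactly where the hypothesis that $\field$ is finite or $\rational$ enters.

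Write $R = R(A)$ and let $d$ be its nilpotency index, so $R^d = 0$; I induct on $d$. If $d = 1$ then $R = 0$, $A$ is semisimple, and $B = A$ works. For $d \geq 2$, set $I := R^{d-1}$, an ideal of $A$ with $I^2 = R^{2(d-1)} = 0$ (here $2(d-1)\ge d$). In $\bar A := A/I$ the radical is $R/I$: it is nilpotent of index $\le d-1$ since $(R/I)^{d-1} = R^{d-1}/I = 0$, and $\bar A/(R/I)\cong A/R$ is semisimple. By the inductive hypothesis $\bar A = \bar B \oplus' (R/I)$ for a subalgebra $\bar B$; let $C$ be its preimage under $A \to \bar A$. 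Then $C$ is a subalgebra of $A$ containing $I$, with $C/I \cong \bar B \cong A/R$ semisimple, and since $I$ is a nilpotent ideal of $C$ one checks $R(C) = I$, so $R(C)^2 = 0$. If the square-zero-radical case gives a subalgebra $B$ with $C = B \oplus' I$, then since $R \cap C$ is a nilpotent ideal of $C$ containing $I$ we get $R \cap C = I$, whence $B \cap R = B \cap I = 0$; combined with $\dim B = \dim(C/I) = \dim(A/R) = \dim A - \dim R$ this forces $A = B \oplus' R$. So it suffices to treat an algebra $A$ with $R^2 = 0$.

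In that case, put $S = A/R$ with quotient map $\pi : A \to S$, so $0 \to R \to A \xrightarrow{\pi} S \to 0$ is exact over $\field$. Since $R^2 = 0$, the $A$-bimodule structure on $R$ factors through $S$ (set $s\cdot r := ar$ and $r\cdot s := ra$ for any $a$ with $\pi(a)=s$; well-definedness is exactly $R\cdot R = 0$), making $R$ an $S$-bimodule. Choose a $\field$-linear section $\rho : S \to A$ of $\pi$ with $\rho(1_S) = 1_A$, and define $\gamma : S\times S \to R$ by $\gamma(s,t) := \rho(s)\rho(t) - \rho(st)$ (which lands in $R$ because $\pi$ is an algebra map); a short computation from associativity of $A$ shows $\gamma$ is a Hochschild $2$-cocycle in $C^2(S,R)$. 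Now the key point: $S = A/R(A)$ is a \emph{separable} $\field$-algebra. Indeed $\field$ is finite or $\rational$, hence perfect, so in the Wedderburn decomposition of the semisimple algebra $S$ each center (a field extension of $\field$) is separable over $\field$, and each factor is central simple over its center, hence separable over $\field$; therefore $S$ is separable and $H^2(S,M) = 0$ for every $S$-bimodule $M$, in particular $H^2(S,R) = 0$. Thus $\gamma = \partial\delta$ for a linear $\delta : S \to R$ with $\delta(1_S) = 0$, i.e.\ $\gamma(s,t) = \rho(s)\delta(t) - \delta(st) + \delta(s)\rho(t)$. Setting $\sigma := \rho - \delta$ and using $\delta(s)\delta(t)\in R^2 = 0$ together with this identity, one verifies $\sigma(s)\sigma(t) = \sigma(st)$ and $\sigma(1_S) = 1_A$; so $\sigma$ is an algebra homomorphism splitting $\pi$, and $B := \sigma(S)$ is a subalgebra on which $\pi$ restricts to an isomorphism onto $S$, giving $B \cap R = 0$ and $\dim B = \dim A - \dim R$, hence $A = B\oplus' R$.

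The main obstacle is the step $H^2(S,R) = 0$, equivalently the separability of $A/R(A)$: this is the only place the hypothesis on $\field$ is used, and over an imperfect field the splitting can genuinely fail, so the argument must route through perfectness of finite fields and of $\rational$. Everything else — the induction bookkeeping and the verification that the corrected section $\sigma$ is multiplicative and unital — is routine, and I would not spell out the cocycle/coboundary manipulations in full.
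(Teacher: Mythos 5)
Your argument is correct, and it is worth noting that the paper itself offers no proof of this statement at all: it is quoted as a classical theorem with a citation to Curtis--Reiner, so there is nothing in the paper for your proof to diverge from. What you give is the standard cohomological proof of Wedderburn--Malcev: reduce by induction on the nilpotency index to the case $R(A)^2=0$ via the ideal $I=R^{d-1}$, then split the square-zero extension $0\to R\to A\to A/R\to 0$ by killing the obstruction cocycle $\gamma$, using that $A/R$ is a separable algebra (finite fields and $\rational$ are perfect) and hence $H^2(A/R,R)=0$. All the individual steps check out: $R\cap C=I$ and the dimension count in the reduction, the fact that $\delta(1_S)=0$ is even automatic (evaluate the coboundary identity at $(1,1)$, using that $\gamma$ is normalized), and the verification that $\sigma=\rho-\delta$ is multiplicative uses exactly $\delta(s)\delta(t)\in R^2=0$. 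The only point I would ask you to make explicit is that the induction should carry along the (harmless) extra claim $1\in B$, so that the preimage $C$ of $\bar B$ is a unital subalgebra before you invoke the square-zero case for it; this is automatic (it is precisely the content of the paper's \lref[Lemma]{lemma_id_B}, or it follows because your $\sigma$ satisfies $\sigma(1_S)=1$), but as written you quietly assume $1_{\bar A}\in\bar B$. A pleasant byproduct of your route is that it proves more than the cited statement: the splitting exists over any perfect field, and indeed whenever $A/R(A)$ is separable, which also explains why the paper's restriction to finite fields or the rationals is exactly what makes the theorem applicable.
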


A decomposition of the algebra $A$ as given above is called a
\emph{Wedderburn-Malcev decomposition} of $A$.

\begin{remark}
	Note that by definition, given any Wedderburn-Malcev decomposition
	of $A = B \oplus' R(A)$, the subalgebra $B$ is isomorphic to
	$A/R(A)$, the ``semisimple'' part of $A$.
\end{remark}

The result of de Graaf et al. \cite{deGraafIKR1997} shows that given an algebra
$A$ and the quotient $A/R(A)$, it is possible to obtain a
Wedderburn-Malcev decomposition of $A = B\oplus R(A)$ in deterministic
polynomial time. We state the result below.

\begin{theorem}[{\cite[Theorem 3.1]{deGraafIKR1997}}]
	\label{thm_wedderburn_malcev_compute}
	There is a deterministic polynomial-time algorithm which, when given
	a finite dimensional algebra $A$ over $\field$, computes a
	Wedderburn-Malcev decomposition $A = B\oplus' R(A)$ of the algebra.
\end{theorem}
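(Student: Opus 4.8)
The plan is to make the existence statement of the Wedderburn--Malcev theorem (\lref[Theorem]{thm_wedderburn_malcev}) effective, by combining the structural subroutines already cited with a layer-by-layer correction along the radical filtration. First I would invoke \lref[Theorem]{thm_radical_compute} to obtain a basis for $R(A)$, and hence a multiplication table for the quotient $A/R(A)$ and for the projection $\pi\colon A\to A/R(A)$; if $R(A)=0$ we are done with $B=A$, so assume the nilpotency index of $R(A)$ is $d\ge 2$. (If desired, \lref[Theorem]{thm_semisimple_compute} additionally produces the decomposition of $A/R(A)$ into simple matrix algebras with explicit matrix units; this is convenient for bookkeeping but not logically needed below.) Now fix any $\field$-linear section $s_0\colon A/R(A)\to A$ of $\pi$, normalized so that $s_0(1)=1$ --- just pick preimages of a basis of $A/R(A)$ that contains $1$. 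Since $\pi$ is an algebra homomorphism, the defect $F_0(\bar x,\bar y):=s_0(\bar x)s_0(\bar y)-s_0(\overline{xy})$ lies in $R(A)$ for all $\bar x,\bar y\in A/R(A)$. The goal is to correct $s_0$ into an honest algebra homomorphism: its image $B$ will then be a subalgebra of dimension $\dim(A/R(A))$ that meets $R(A)$ trivially and contains $1$, so $A=B\oplus' R(A)$ as required.

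The correction proceeds through $R(A)\supsetneq R(A)^2\supsetneq\cdots\supsetneq R(A)^d=0$. Suppose inductively that we have a normalized section $s_k$ (so $s_k(1)=1$) whose defect $F_k$ takes values in $R(A)^k$; for $k=1$ this is $s_1=s_0$. Plugging $s_k$ into the associativity identity for a triple product and reducing modulo $R(A)^{k+1}$ shows that $F_k$, viewed with values in the $A/R(A)$-bimodule $R(A)^k/R(A)^{k+1}$, is a normalized Hochschild $2$-cocycle of the semisimple algebra $A/R(A)$. Because $\field$ is a finite field (or the rationals) --- in particular a perfect field --- the semisimple algebra $A/R(A)$ is separable, so its second Hochschild cohomology vanishes for every bimodule of coefficients; hence $F_k$ is a coboundary. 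Lifting the corresponding $1$-cochain arbitrarily to a linear map $g_k\colon A/R(A)\to R(A)^k$ with $g_k(1)=0$, and replacing $s_k$ by $s_{k+1}:=s_k+g_k$, yields a section (the correction lands in $R(A)\subseteq\ker\pi$, so $\pi\circ s_{k+1}=\pi\circ s_k$) with $s_{k+1}(1)=1$ and with defect $F_{k+1}$ now taking values in $R(A)^{k+1}$. After $d-1$ such steps, $s_d$ is an algebra homomorphism and $B:=s_d(A/R(A))$ is the desired complement.

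Each ingredient is polynomial time: computing $R(A)$ (and, if used, the semisimple structure) is so by the cited theorems; the bases of $R(A)^2,R(A)^3,\dots$ are computed once, exactly as in \lref[Lemma]{lemma_nil_index}; and each of the $d-1\le D-1$ correction steps is just the linear-algebra task of writing the explicitly computed cochain $F_k$ as a coboundary, i.e.\ solving a linear system over $\field$ whose $\poly(D)$ unknowns are the coordinates of $g_k$ and whose $O(D^3)$ equations come from evaluating the cocycle on triples of basis vectors --- handled by Gaussian elimination. Thus the total running time is $\poly(N,\log|\field|)$, matching the claim. The one genuinely non-formal point, and the step I would be most careful with, is the vanishing of the obstruction at each layer: I would justify it through separability of $A/R(A)$ as above, and as a safety net note that the solvability of each layer's linear system also follows directly from \lref[Theorem]{thm_wedderburn_malcev}, since a global complement $B$ exists and its images in the successive quotients exhibit the needed coboundaries. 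Finally, to keep the first layer tidy I would, as a preprocessing step, Hensel-lift a complete system of orthogonal idempotents of $A/R(A)$ into $A$ using the precision-doubling iteration $e\mapsto 3e^2-2e^3$ (which converges since $R(A)^d=0$), so that $s_0$ already respects the Peirce block structure and only the defect coming from the residue-field extensions remains to be corrected.
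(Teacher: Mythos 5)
This statement is not proved in the paper at all: it is imported verbatim as \cite[Theorem 3.1]{deGraafIKR1997}, so there is no in-paper proof to compare against. Judged on its own, your argument is a correct and essentially standard proof of the effective Wedderburn--Malcev theorem, and it is in the same spirit as the cited algorithm: compute $R(A)$ via \lref[Theorem]{thm_radical_compute}, lift a linear section of $A\to A/R(A)$ step by step along the filtration $R(A)\supseteq R(A)^2\supseteq\cdots\supseteq R(A)^d=0$, observe that the multiplicative defect at each layer is a Hochschild $2$-cocycle of $A/R(A)$ with coefficients in $R(A)^k/R(A)^{k+1}$, and kill it using $H^2(A/R(A),M)=0$, which holds because $A/R(A)$ is semisimple over a perfect field ($\field$ finite or $\rational$) and hence separable; each correction is a $\poly(D)$-size linear system, and there are at most $d-1\le D$ layers, so the whole procedure is polynomial time (note $F_{k+1}\equiv F_k+\delta g_k$, so you want $\delta g_k=-F_k$; the normalization $g_k(1)=0$ is automatic from $F_k(1,1)=0$). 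The only point I would not let stand as written is your ``safety net'': the bare existence statement (\lref[Theorem]{thm_wedderburn_malcev}) gives \emph{some} homomorphic section of $A/R(A)^{k+1}\to A/R(A)$, but that by itself does not show that \emph{your} particular cocycle $F_k$ (attached to the specific lift $s_k$) is a coboundary --- for that you need either the conjugacy part of Malcev's theorem to transport the global complement onto your lift, or simply the cohomological vanishing you already invoke; so the separability argument should be regarded as the actual justification, not an optional one. Two further minor housekeeping items: over $\rational$ one should add the routine remark that bit sizes in the repeated Gaussian eliminations stay polynomially bounded, and the idempotent-lifting preprocessing ($e\mapsto 3e^2-2e^3$) is harmless but, as you say, not needed.
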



\section{Computing the determinant over a given algebra}
\label{section_alg_input}

In this section, we assume that the algebra $A$ is presented to the
algorithm in the form of a basis $\{a_1,\ldots,a_D\}$ along with a
table that lists the pairwise products $a_ia_j$ for $i,j\in [D]$. We
would like to efficiently compute the determinant of an input $n\times
n$ matrix with entries from the algebra $A$. We assume that $D \leq
\poly(n)$. (It is easy to see that when $D$ is very large, say
$n!$, even the bruteforce algorithm is efficient. So we assume that
$D$ is small.)

Note that under no constraints on the algebra $A$, this problem is at
least as hard as computing the determinant over $M_2(\field)$ and
hence the hardness results of \lref[Theorem]{thm:2x2} apply. We would like general
conditions on the algebra under which the problem becomes tractable.
When $\field$ is finite, such conditions should ensure that the
semisimple part $A/R(A)$ is commutative (or else $M_2(\field)$ is a
subalgebra of $A$). By the Wedderburn-Malcev theorem, this implies
that $A = B\oplus' R(A)$, where $B$ is a commutative subalgebra of
$A$. We would like general conditions on $R(A)$ under which the
determinant is efficiently computable.

At this point, let us look at an important example that motivates this
work. Consider the algebra $U_d(\field)$ of $d\times d$
upper-triangular matrix algebras over $\field$. Let $D_d(\field)$
denote the (commutative) subalgebra of $d\times d$ diagonal matrices
and let $U'_d(\field)$ be the subalgebra of strictly upper triangular
matrices (i.e., elements of $U_d(\field)$ that contain only zeroes
along the diagonal). It is well-known (see \cite[Section 2.5]{rep-theory2009}, for example) that $R(U_d(\field)) = U'_d(\field)$ and that $U_d(\field) =
D_d(\field) \oplus' U'_d(\field)$ is a Wedderburn-Malcev decomposition
of $U_d(\field)$. Thus, $U_d(\field)/R(U_d(\field))$ is commutative
and hence, this is the kind of algebra we would like to consider. It
can be shown (and we will see below) that when $d$
is constant, the determinant over $U_d(\field)$ can be computed in
polynomial time. On the other hand, when $d$ is $n^{\Omega(1)}$, then
it is known that the problem is hard; this is implicit in an earlier
result of Arvind and Srinivasan \cite{ArvindS2010}, and can also be proved from
\lref[Theorem]{thm:2x2} above.  Therefore, any criterion that
characterizes algebras w.r.t. the tractability of the determinant over
them must explain this difference.

We suspect that this criterion is the nilpotency index of
the radical of the algebra in question. It is easy to see that the
nilpotency index of of $U'_d(\field)$ is $d$. In this case, we can use
\lref[Theorem]{thm_det_algo}, which implies the following:

\begin{corollary}
	\label{cor_det_algo}
	For any constant $d$ and field $\field$ that is either finite or the
	rationals, there is a deterministic polynomial-time algorithm
	running in time $\poly(N^d)$, which when given as input the
	description of an algebra $A$ over $\field$ with nilpotence index
	bounded by $d$ and an $n\times n$ matrix $M$ with entries from $A$,
	computes the determinant of $M$.
\end{corollary}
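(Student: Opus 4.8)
The plan is to obtain this as an immediate consequence of \lref[Theorem]{thm_det_algo}: that theorem already supplies a deterministic algorithm running in time $N^{O(d)}$, where here $d$ is the \emph{actual} nilpotency index of $R(A)$, and the corollary is just the remark that when we are promised this $d$ is an absolute constant the bound becomes $\poly(N^d)$. So the first thing I would do is line up the hypotheses. \lref[Theorem]{thm_det_algo} needs $A/R(A)$ to be commutative, which is the standing assumption in this part of the paper (without it $M_2(\field)$ sits inside $A$ and the determinant is hard, at least over finite fields, by \lref[Theorem]{thm:2x2}); and it needs a handle on the nilpotency index. Since $A$ is presented by its multiplication table, the algorithm can compute a basis for $R(A)$ by \lref[Theorem]{thm_radical_compute} and then the exact nilpotency index $d' \le d$ by \lref[Lemma]{lemma_nil_index}, both in deterministic polynomial time; there is nothing further to check.

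Next I would simply run the algorithm of \lref[Theorem]{thm_det_algo} on the input pair $(A, M)$. Its running time is $N^{O(d')}$, and since $d' \le d$ with $d$ a constant independent of the input, this is $N^{O(d)} = \poly(N^d)$ --- exactly the claimed bound, and in particular polynomial time for fixed $d$. The only degenerate case, $d' \ge n$, is already handled inside the proof of \lref[Theorem]{thm_det_algo}: there the brute-force expansion of $\det(M)$ over $A$ into $n!$ products fits within the $N^{O(d)}$ budget.

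The one point that genuinely deserves a second look is that the corollary is stated for $\field$ finite \emph{or the rationals}, whereas much of the surrounding dichotomy machinery was phrased over finite fields. So I would revisit the ingredients in the proof of \lref[Theorem]{thm_det_algo} and confirm each one survives over $\rational$: the Wedderburn--Malcev decomposition $A = B \oplus' R(A)$ exists and is efficiently computable over a finite field or the rationals (\lref[Theorem]{thm_wedderburn_malcev}, \lref[Theorem]{thm_wedderburn_malcev_compute}); the radical and its nilpotency index are computable in either case (\lref[Theorem]{thm_radical_compute}, \lref[Lemma]{lemma_nil_index}); and the commutative-algebra determinant routine of \lref[Theorem]{thm_comm_alg} is field-agnostic. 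The remaining steps of that proof --- forming $B^{\otimes (t+1)}$ and the embeddings $\phi_\ell$, expanding products in a basis $\{e_k\}$ of $B$, and reading the coefficients $c_{\overline{k}}$ off $\det(M'(S,f))$ --- are purely linear-algebraic and go through verbatim over $\rational$. I do not expect a real obstacle here: the entire mathematical content lives in \lref[Theorem]{thm_det_algo}, and this corollary is just unwinding the exponent $O(d)$ for constant $d$ together with this routine change of field.
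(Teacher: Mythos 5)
Your proposal is correct and matches the paper exactly: the corollary is stated there as an immediate consequence of \lref[Theorem]{thm_det_algo}, with no further argument given, and your unwinding of the exponent $N^{O(d)}$ into $\poly(N^d)$ for constant $d$ is precisely the intended reading. Your observation that the commutativity of $A/R(A)$ must be carried over as a standing hypothesis (the corollary's literal statement omits it) and your check that every ingredient of \lref[Theorem]{thm_det_algo} works over $\rational$ as well as over finite fields are both appropriate and consistent with the paper's cited results.
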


Thus, the nilpotency index successfully ``explains'' the tractability
of computing the determinant over $U_d(\field)$, when $d = O(1)$. But
does the nilpotency index explain every such instance? We suspect that
the issue of when exactly the determinant becomes intractable is
closely related to the cases when $d$ becomes large, but we are unable
to prove this.

\end{document}